\documentclass[sigplan,nonacm]{acmart}
\acmConference[ATC '26]{ACM SIGOPS Annual Technical Conference}{2026}{Hong Kong}

\usepackage{listings}
\usepackage{multirow}
\usepackage{makecell}
\usepackage{pgfplots}
\pgfplotsset{compat=1.18}
\usepgfplotslibrary{statistics}
\usepackage{tikz}
\usetikzlibrary{arrows.meta,positioning,shapes.geometric,shapes.multipart,shapes.misc,calc,fit,backgrounds,decorations.pathreplacing,patterns,shadows,automata}
\usepackage{xcolor}

\definecolor{cblue}{HTML}{4A90D9}
\definecolor{cgreen}{HTML}{27AE60}
\definecolor{corange}{HTML}{E67E22}
\definecolor{cred}{HTML}{E74C3C}
\definecolor{cpurple}{HTML}{8E44AD}
\definecolor{cgray}{HTML}{95A5A6}
\pgfplotsset{
  x402fig/.style={
    width=\linewidth, height=3.3cm,
    font=\scriptsize,
    tick align=outside, tick pos=left,
    grid=major, grid style={dashed, black!12, line width=0.3pt},
    axis line style={black!70, line width=0.5pt},
    major tick length=2pt,
    label style={font=\scriptsize, inner sep=1pt},
    tick label style={font=\tiny},
    title style={font=\scriptsize\bfseries, yshift=-2pt},
    legend cell align=left,
    legend style={font=\tiny, draw=none, fill=white, fill opacity=0.85,
                  text opacity=1, inner sep=1.5pt, row sep=-1.5pt},
    enlarge x limits=0.08,
  },
}
\definecolor{ylA}{RGB}{255,255,204}\definecolor{ylB}{RGB}{255,237,160}
\definecolor{ylC}{RGB}{254,217,118}\definecolor{ylD}{RGB}{254,178,76}
\definecolor{ylE}{RGB}{253,141,60}\definecolor{ylF}{RGB}{252,78,42}
\definecolor{ylG}{RGB}{227,26,28}\definecolor{ylH}{RGB}{189,0,38}
\definecolor{ylI}{RGB}{128,0,38}
\newcommand{\ylname}[1]{\ifcase#1 ylA\or ylB\or ylC\or ylD\or ylE\or ylF\or ylG\or ylH\or\else ylI\fi}
\newcommand{\heatcell}[5]{%
  \pgfmathtruncatemacro{\hbin}{max(0,min(8,floor(#3/#4*8.999)))}%
  \edef\hfill{\ylname{\hbin}}%
  \fill[\hfill] (#1,#2) rectangle ++(1,1);%
  \ifnum\hbin>4 \node[font=\tiny,text=white] at (#1+0.5,#2+0.5) {#5};%
  \else \node[font=\tiny,text=black] at (#1+0.5,#2+0.5) {#5};\fi
}
\usepackage{subcaption}
\usepackage{booktabs}
\usepackage{enumitem}
\usepackage{graphicx}
\usepackage{bm}
\usepackage{amsmath,amssymb,amsthm}
\usepackage{placeins}
\usepackage{microtype}

\graphicspath{{fig/}}

\theoremstyle{acmplain}
\newtheorem{theorem}{Theorem}
\newtheorem{lemma}{Lemma}
\newtheorem{corollary}{Corollary}

\newtheorem*{theorem*}{Theorem}   % unnumbered restatement used in the appendix
\theoremstyle{acmdefinition}

\lstdefinelanguage{TypeScript}{
  keywords={import, export, from, const, let, var, function, return, if, else, async, await, new, typeof, type, interface, extends, implements, class, this, try, catch, throw},
  sensitive=true,
  morecomment=[l]{//},
  morecomment=[s]{/*}{*/},
  morestring=[b]',
  morestring=[b]",
  morestring=[b]`,
}
\begin{document}
%-------------------------------------------------------------------------------

\title{Free-Riding the Agentic Web: \\A Systematic Security Analysis of x402 Payments}

\author{Shengchen Ling}
\affiliation{%
  \institution{City University of Hong Kong}
  \country{}
}

\author{Yihang Huang}
\affiliation{%
  \institution{Zhejiang University}
  \country{}
}

\author{Yuefeng Du}
\affiliation{%
  \institution{City University of Hong Kong}
  \country{}
}

\author{Yuan Chen}
\affiliation{%
  \institution{City University of Hong Kong}
  \country{}
}

\author{Yajin Zhou}
\affiliation{%
  \institution{The Chinese University of Hong Kong}
  \country{}
}

\author{Lei Wu}
\affiliation{%
  \institution{Zhejiang University}
  \country{}
}

\author{Cong Wang}
\affiliation{%
  \institution{City University of Hong Kong}
  \country{}
}

\renewcommand{\shortauthors}{Ling et al.}

\begin{abstract}
The x402 protocol has crossed from prototype to infrastructure for the agentic web, driving 130~million all-time transactions and embedded in Google Cloud, Cloudflare, and Stripe. Yet bridging synchronous HTTP requests with asynchronous blockchain finality creates state-synchronization challenges, and x402's security has so far been examined only in piecemeal vendor disclosures. It is moreover not one artefact but a stack of an HTTP semantic, per-chain schemes, and a long tail of SDK and deployment choices whose required guarantees prior work has not established.
  
We perform a systematic security analysis organized around five invariants grounded in specifications, literature, and vendor expectations, resolving every violation to the responsible layer. We identify four flaw classes: cross-resource substitution, duplicate-settlement race (independently corroborated by subsequent third-party reports), allowance overdraft, and denial of settlement. Against official SDKs and a production deployment, these reach resource-leakage ratios up to 100\%. For pay-per-token scheme we prove a structural limit: no output-only pricing can be both fair to honest users and bounded against inflation of the hidden ``thinking'' tokens, the price of fairness being a $\sqrt{1+\Theta}$ manipulation gap. We propose per-flaw mitigations and a defense triple with provable guarantees, cutting per-call reasoning cost by 47\% and inverting attacker leverage from 8.7$\times$ to 0.9$\times$ at only 2.8\% overhead. All findings have been disclosed.
\end{abstract}

\keywords{x402, agentic payments, payment-protocol security, blockchain, LLM inference security}

\maketitle

\section{Introduction}
\label{sec:intro}

A new class of payment infrastructure is forming under the agentic web. As autonomous software agents increasingly compose services, negotiate access, and procure computation on their owners' behalf, the friction of human-mediated payment becomes a structural bottleneck. The HTTP \texttt{402 Payment Required} status code, dormant in the original specification, has been repurposed as the negotiation channel for a family of agentic payment protocols: x402 (Coinbase)~\cite{x402_pub}, the Machine Payments Protocol (MPP; Stripe and Tempo Labs)~\cite{mpp_faq}, the Agent Payments Protocol (AP2; Google)~\cite{x402_googlecloud}, and the Trusted Agent Protocol (TAP; Visa)~\cite{visa}. Among these, x402 has by far the most deployed traffic.

\textbf{The x402 Protocol.} x402 has crossed from prototype to infrastructure. As of May~2026, the Dune dashboard~\cite{hashed_x402_dune} attributes approximately 130~million all-time transactions to x402, dominated by Base and Solana and settled almost entirely in USDC, with three facilitators processing the bulk of volume~\footnote{Base 54\%, Solana 35\%, Polygon 10\%; facilitators CDP 32.7\%, Dexter 24.7\%, PayAI 13.5\% (71\% combined); USDC 98.8\% of EVM dollar volume; x402scan tracks 1,853 resource servers, agentic.market 767 services (mid-May~2026).}. Major stacks have integrated x402 natively, e.g., Google AP2, Cloudflare Agents~\cite{x402_cloudfare}, Stripe, and AWS Bedrock. Monthly volume peaked in late~2025 and has since contracted (Fig.~\ref{fig:adoption}), but its cumulative scale and absorption into enterprise stacks mark it as durable infrastructure.

\begin{figure}[t]
    \centering
    \footnotesize
    \begin{tikzpicture}
        \begin{axis}[
            width=\linewidth,
            height=3.5cm,
            ybar,
            x=0.53cm,
            ymode=log,
            log origin=infty,
            ymin=100,
            ymax=100000000,
            symbolic x coords={May, Jun, Jul, Aug, Sep, Oct, Nov, Dec, Jan, Feb, Mar, Apr, May26},
            xtick=data,
            xticklabel style={font=\scriptsize},
            yticklabel style={font=\scriptsize},
            enlarge x limits=0.06,
            nodes near coords,
            nodes near coords align={horizontal},
            nodes near coords style={
                font=\tiny,
                rotate=60,
                anchor=south west,
                inner sep=0pt,
                outer sep=0pt
            },
            point meta=rawy,
            ylabel={Transaction volume (log)},
            ylabel style={
                at={(axis description cs:-0.1,0.5)},
                anchor=south
            },
            xlabel={Month (2025-2026)},
            bar width=8pt,
            axis x line*=bottom
        ]
            \addplot[fill=blue!60] coordinates {
                (May, 725)
                (Jun, 856)
                (Jul, 4395)
                (Aug, 13968)
                (Sep, 39807)
                (Oct, 4059276)
                (Nov, 47721587)
                (Dec, 49697622)
                (Jan, 16706074)
                (Feb, 3968880)
                (Mar, 4923909)
                (Apr, 4504810)
                (May26, 5304144)
            };
        \end{axis}
    \end{tikzpicture}
    \caption{Monthly x402 transaction count from May~2025 to May~2026~\cite{dune}. }
    \label{fig:adoption}
\end{figure}

\textbf{Our approach.} We organize the analysis around the \textbf{security invariants} that a correct x402 deployment must preserve, and we test the deployed protocol, SDKs, and configurations against them. Specifically, we distill five invariants \texttt{I1}--\texttt{I5}: payment integrity, value consistency, context binding, authorization uniqueness, and execution conservation; each grounded in at least two of (a)~the x402 specification or a canonical EIP it builds on, (b)~the prior payment-security literature, and (c)~a documented vendor expectation. 

The invariants tell us what could break; to make each violation actionable, we place every finding in one of three layers: protocol, SDK, or deployment. This separates a flaw that every compliant implementation inherits from one a single SDK introduced or an operator's configuration invited, informing which downstream operators are exposed and whether a finding would survive a clean reimplementation. The invariants and the layer taxonomy are in Section~\ref{sec:model}.

Applying these, we identify five flaw classes that constitute the contributions of this paper; their layer assignment and the security invariant each violates are summarized in Table~\ref{tab:summary}. Against official SDKs and the Thirdweb production deployment, these flaws reach resource-leakage ratios up to 100\%, and the duplicate-settlement race is independently corroborated by third-party production incident reports. We further extend the analysis to dynamic pricing under the ``pay-per-token'' scheme, where we prove a structural dilemma: no output-only pricing can be simultaneously fair to honest users and bounded against hidden-token inflation, the unavoidable gap being $\sqrt{1+\Theta}$ invisible to the pricing layer (Theorem~\ref{thm:impossibility}, Section~\ref{sec:impossibility}). The limit generalizes to any pay-per-consumption protocol over a hidden-compute service; we estimate manipulation ratios of 35--3{,}213$\times$ across 10 production language models.
We have responsibly disclosed our findings to the relevant vendors. 

\begin{table*}[!t]
    \centering
    \caption{Summary of the five flaw classes that constitute the contributions of this paper.}
    \small
    \begin{tabular}{llllll}
        \toprule
        ID & Flaw & Layer & Invariant violated & Impact & Section \\
        \midrule
        \texttt{F1} & Cross-resource substitution & Protocol & \texttt{I3} & Resource swapping & Sec.~\ref{sec:cross-resource} \\
        \texttt{F2} & Duplicate-settlement race & SDK & \texttt{I4} & Resource duplication & Sec.~\ref{sec:f1-toctou} \\
        \texttt{F3} & Allowance overdraft on the \texttt{upto} scheme & Protocol & \texttt{I2}, \texttt{I5} & Subsidized compute & Sec.~\ref{sec:allowance} \\
        \texttt{F4} & Denial of settlement & Deployment & \texttt{I1}, \texttt{I5} & Free-riding & Sec.~\ref{sec:denial} \\
        \texttt{F5} & Hidden-compute pricing impossibility & Deployment & \texttt{I2}, \texttt{I5} & Subsidized compute & Sec.~\ref{sec:impossibility} \\
        \bottomrule
    \end{tabular}
    \label{tab:summary}
\end{table*}

\textbf{Contributions.} Our contributions are as follows.
\begin{enumerate}
      \item \textbf{Security invariants and a layer taxonomy.} We distill five grounded security invariants for x402, together with a 3-layer taxonomy that attributes every violation to a responsible layer.
      \item \textbf{Five flaws.} We identify four flaws, namely cross-resource substitution, duplicate-settlement race, allowance overdraft, and denial of settlement, with their exposure measured in the wild. We also prove a structural impossibility over hidden-compute services.
      \item \textbf{Layered defenses with provable guarantees.} We give per-flaw mitigations and a defense triple with provable bounds on the residual exposure.
\end{enumerate}

\textbf{Architecture.} Section~\ref{sec:background} gives background on x402. Section~\ref{sec:model} formalizes the system model, security invariants, and layer taxonomy. Section~\ref{sec:vulnerability} presents the five flaw classes \texttt{F1}--\texttt{F5}, and Section~\ref{sec:mitigation} the corresponding defenses. Section~\ref{sec:discussion} discusses limitations and future directions, Section~\ref{sec:conclusion} concludes, and Appendix~\ref{sec:discussion-ethics} covers ethics and responsible disclosure.
         % §1 Introduction
\section{Background and Related Work}
\label{sec:background}

\subsection{The x402 Protocol}

Coinbase introduced x402 in May~2025~\cite{x402_pub}, repurposing the HTTP~402 ``Payment Required'' status code to embed payment instructions. The resulting model suits the high-frequency, automated payments of AI agents, for which traditional authorize-capture flows are ill-suited. It comprises three artifacts whose separation is central to our analysis: 1)~an HTTP semantic, a rail-agnostic convention in which a paid resource returns HTTP~402 with a \texttt{PaymentRequirements} payload and a paid request carries a signed \texttt{X-PAYMENT} header; 2)~per-chain scheme specifications binding the semantic to a concrete token transfer, e.g., the EVM \texttt{exact} scheme uses EIP-3009~\cite{eip3009} \texttt{transferWithAuthorization} on USDC, while the SVM scheme uses Solana-native signatures; and 3)~capability extensions, most prominently the \texttt{upto} scheme, which supports dynamic pricing by authorizing a maximum amount that the merchant deducts incrementally.
  
\subsubsection{Workflow and the verify-settle decoupling} On a request to a protected resource, the server returns HTTP~402 with the token contract, payee, amount, and a nonce; the client resubmits with an EIP-712~\cite{eip712} typed-data signature in the headers (Fig.~\ref{fig:x402}, steps 1-3). The server delegates signature and balance \emph{verification} to a Facilitator (steps 4-5), then prepares the resource (step 6) before or concurrently with on-chain \emph{settlement} (steps 7-10). This deliberate decoupling of verification from settlement underlies x402's high-throughput claim and seeds the bug classes.

\begin{figure}
    \centering
    \includegraphics[width=\linewidth]{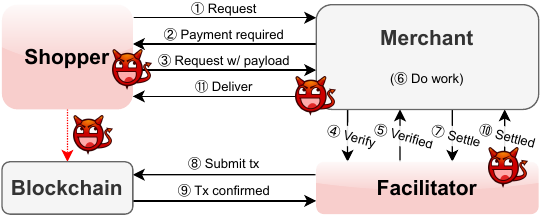}
    \caption{The x402 protocol workflow among Shopper, Merchant, Facilitator, and Blockchain. Numbered steps depict the lifecycle from request to settlement; red icons denote the attack surfaces analyzed in Section~\ref{sec:vulnerability}.}
    \label{fig:x402}
\end{figure}

\subsubsection{Cryptographic substrate} EIP-712~\cite{eip712} binds a signature to a contract and chain via a domain separator. EIP-3009~\cite{eip3009} offers two transfer variants and warns against \texttt{transferWithAuthorization} (which lets any party broadcast the signed authorization) for delegated calls, recommending \texttt{receiveWithAuthorization} instead; yet the x402 \texttt{exact} scheme mandates the former, a choice whose front-running trade-offs has been debated (Appendix~\ref{sec:background-known-issues}).

\subsection{Payment-Protocol Logic Flaws} 

Our analysis continues a line of payment-security work grounded in cryptographic-state coordination and specification/implementation gaps (Table~\ref{tab:prior-work}). We extend the tri-party trust abstraction of Wang et al.'s Cashier-as-a-Service model~\cite{5958046}, replacing its fiat cashier with a decentralized facilitator over a blockchain; Yang et al.~\cite{yang2017show} motivate our \texttt{F1} SDK-failure catalog; Do et al.~\cite{9833681} formalize web-payment spec ambiguity, which we sharpen into an explicit protocol/SDK/deployment taxonomy; and Bishop and Dilger's TOCTOU treatment~\cite{bishop1996toctou} is the antecedent of \texttt{F2}, here generalized across three asynchronous state machines. Related e-commerce-integration and digital-wallet analyses~\cite{sun2014detecting,xing2013integuard,10.1145/3543507.3583319,298192,272296} inform our multi-pattern audit methodology.

\begin{table*}[!t]
\centering
\caption{Prior payment-security work, positioned against this paper by trust model and analysis layer.}
\label{tab:prior-work}
\small
\setlength{\tabcolsep}{4pt}
\begin{tabular}{@{}p{2.7cm}p{0.8cm}p{3.1cm}p{3cm}p{3.4cm}p{3.5cm}@{}}
\toprule
Work & Year & Domain & Trust model & Layer focus & Key flaw class \\
\midrule
Bishop \& Dilger~\cite{bishop1996toctou} & 1996 & Local filesystem & Single domain & System call & TOCTOU races  \\
Wang et al.~\cite{5958046} & 2011 & E-commerce cashier & Tri-party (fiat) & Application & Cashier-as-a-Service flaws  \\
Yang et al.~\cite{yang2017show} & 2017 & In-app payment & Third-party SDK & SDK & Binding mistakes  \\
Lou et al.~\cite{272296} & 2021 & Personal payment & Cross-application & Integration logic & State inconsistency  \\
Do et al.~\cite{9833681} & 2022 & W3C Web Payments & Client-side & Specification & Specification ambiguity  \\
Anwar et al.~\cite{298192} & 2024 & Digital wallets & Wallet implementation & SDK & Timing-based bypass \\
\textbf{This work} & 2026 & x402 agentic payments & Facilitator + chain & Protocol, SDK, deployment & Five flaw classes  \\
\bottomrule
\end{tabular}
\end{table*}

\subsection{Agentic-Payment Security} 

A402~\cite{a402} proposes a clean-slate protocol, using TEE-assisted atomic service channels to close the gap between verification and settlement; we instead characterize that gap and a broader set of flaws. Two contemporaneous SoKs chart the space conceptually, a five-dimension survey of agentic commerce~\cite{sok_agentic_commerce} and a four-stage lifecycle taxonomy~\cite{sok_blockchain_a2a}, but neither contributes the concrete, reproduced flaws or in-the-wild measurements we do. A concurrent signature-replay study~\cite{wang2025signature} is mechanistically related to \texttt{F2} but operates at the token-contract layer. Further related work appears in Appendix~\ref{app:extended-related}.
Plus, the dynamic pricing setting connects to a growing line on hidden-compute billing. CoIn~\cite{coin} verifiably audits reasoning-token \emph{counts} (94.7\% detection); BadThink~\cite{badthink} (17$\times$ chain-of-thought inflation) is a concrete instantiation of the threat class, Price Reversal~\cite{pricereversal} its empirical footprint, and Epistemic Observability~\cite{epistemic_observability} supplies the entropy-projection method our proof adopts; the BAR trilemma~\cite{bar_conjecture}, the principal-agent framing~\cite{hadfield_koh_2025}, and ClawCoin's compute-indexed currency~\cite{clawcoin} situate the result in mechanism design.
    % §2 Background and Related Work
\section{System Model}
\label{sec:model}

In this section we formalize the x402 payment workflow, define the five security invariants, and introduce the three-layer taxonomy that organizes our findings.

\subsection{System Abstraction}

We model the payment system as a tuple $\langle \mathcal{S}, \mathcal{M}, \mathcal{F}, \mathcal{B} \rangle$, representing the \textbf{Shopper}, \textbf{Merchant}, \textbf{Facilitator}, and the \textbf{Blockchain}, respectively. As depicted in Fig.~\ref{fig:x402}, $\mathcal{S}$ initiates the interaction (steps 1--3), $\mathcal{M}$ manages the resource lifecycle (steps 6, 11), and $\mathcal{F}$ acts as the bridge for off-chain verification (steps 4--5) and on-chain settlement (steps 7--10). The system adopts the tri-party cashier abstraction of Wang et al.~\cite{5958046}, extended with a fourth, asynchronous-finality blockchain principal.

\begin{itemize}
    \item Merchant ($\mathcal{M}$): Holds a set of resources $\mathcal{R}$. Each resource $r \in \mathcal{R}$ is associated with a pricing function $P(r)$ and a unique identifier $ID_r$.
    \item Shopper ($\mathcal{S}$): Seeks to access $r$. $\mathcal{S}$ holds a key pair $(pk_s, sk_s)$ and an on-chain balance $Bal(pk_s)$. To authorize payment, $\mathcal{S}$ generates a signed payload $\sigma$, which encapsulates the payment metadata.
    \item Facilitator ($\mathcal{F}$): Acts as an oracle for $\mathcal{M}$. $\mathcal{F}$ provides two functions: $Verify(\sigma)$ which checks the cryptographic validity and balance of $\mathcal{S}$, and $Settle(\sigma)$ which settles the transaction on $\mathcal{B}$.
    \item Blockchain ($\mathcal{B}$): Source of truth for asset ownership. It maintains the global state of balances and nonces.
\end{itemize}

The protocol is understood as a state transition where $\mathcal{M}$ transitions the state of a request from \texttt{PENDING} to \texttt{DELIVERED} upon receiving a valid proof of payment $\sigma$ from $\mathcal{S}$.

\subsection{Security Invariants}
\label{sec:invariants}

We identify five security invariants that delineate the safety envelope of x402. These are \textit{verifiable} claims the protocol and prior literature already commit to: each is anchored in at least two sources, labeled below as the protocol \textbf{(S)pec}, prior \textbf{(W)ork} in the payment-security literature, and \textbf{(V)endor} documentation. Our subsequent analysis (Section~\ref{sec:vulnerability}) asks the inverse question: \textit{do the deployed protocol, SDK, and implementations actually preserve these invariants?} Where they do not, we report the gap and trace it to the responsible layer.

\subsubsection{Invariant 1: Payment Integrity (\texttt{I1})}
The delivery of resource $r$ must be accompanied by a transaction $Tx$ confirmed on $\mathcal{B}$, with sender and receiver fields binding the payment to the intended payer and merchant:
\begin{equation}
    \begin{split}
    Delivered(r) \implies \exists Tx \in \mathcal{B} : \\
    Sender(Tx)=pk_s \land Receiver(Tx)=\mathcal{M}.
    \end{split}
\end{equation}
\textit{Grounding:} (S) the x402 challenge mandates a \texttt{payTo} field that fixes the merchant recipient; (W) Wang et al.~\cite{5958046} identify signed-intent integrity as the property whose violation enables Cashier-as-a-Service attacks; (V) CDP documentation~\cite{coinbase_cdp_x402} defines settlement as an on-chain transfer to that \texttt{payTo} address, so a delivered resource must correspond to a confirmed payment to the merchant.

\subsubsection{Invariant 2: Value Consistency (\texttt{I2})}
The transaction value must cover the negotiated price:
\begin{equation}
    v_{tx} \ge P(r).
\end{equation}
\textit{Grounding:} (S) the \texttt{maxAmountRequired} field of the 402 challenge specifies the price the payment must cover; (W) Wang et al.~\cite{5958046} (the negotiated-price property) and Do et al.'s~\cite{9833681} formal W3C Web Payments analysis both treat value consistency as a top-level safety property.

\subsubsection{Invariant 3: Context Binding (\texttt{I3})}
A payment authorization $\sigma$ must be cryptographically bound to the resource $ID_r$ it is intended for:
\begin{equation}
    \forall \sigma, r : Verify(\sigma, r) \implies Binding(\sigma) = ID_r.
\end{equation}
\textit{Grounding:} (S) the x402 EIP-712 typed data includes the resource-identifying \texttt{request} payload; (W) Yang et al.~\cite{yang2017show} frame context binding as an SDK-layer integrity property; (V) CDP documentation~\cite{coinbase_cdp_x402} describes each payment as authorizing access to one specific resource.

\subsubsection{Invariant 4: Authorization Uniqueness (\texttt{I4})}
A payment authorization $\sigma$ containing nonce $n$ is consumed exactly once, regardless of how many concurrent HTTP requests $req_1\dots req_k$ carry it:
\begin{equation}
    \sum_{req \in Requests} \mathbb{I}(req, n) \le 1.
\end{equation}
\textit{Grounding:} (S) EIP-3009~\cite{eip3009} mandates per-authorizer nonce consumption via the token contract's \texttt{\_authorizationStates} mapping; (V) Coinbase's SVM-side deduplication effort (the \texttt{SettlementCache}~\cite{x402_pr_1468}) acknowledges that the protocol intends a single-consumption guarantee.

\subsubsection{Invariant 5: Execution Conservation (\texttt{I5})}
For resources with non-negligible server-side cost $v(r) > \epsilon$, server-side execution must begin only after the payment is either confirmed on $\mathcal{B}$ or locked:
\begin{equation}
    v(r) > \epsilon \implies Status(Tx) \in \{\texttt{CONFIRMED}, \texttt{LOCKED}\}.
\end{equation}
\textit{Grounding:} (W) Bishop and Dilger~\cite{bishop1996toctou} establish that any check-then-act sequence over concurrent state requires atomicity or explicit locking; (W) Wang et al.'s~\cite{5958046} cashier model requires that substantial work not be initiated against unsecured value.

\subsection{Three-Layer Taxonomy}
\label{sec:taxonomy}

x402 does not denote a single artifact but a stack of three increasingly concrete entities. We organize our findings by which entity is responsible for each flaw.

\subsubsection{The Protocol Layer}
The x402 specifications: the HTTP semantic, the per-chain scheme specifications~\cite{x402_scheme_exact_evm, x402_scheme_exact_svm}, and the \texttt{upto} extension. A flaw belongs to this layer when its existence is determined by the specification: any compliant implementation inherits the flaw, and no spec-conformant implementation can avoid it.

\subsubsection{The SDK Layer} 
The x402 specification leaves substantial latitude to implementers. Concurrency control on settlement, idempotency caching, signature-cache replay protection, and wallet-type compatibility are all delegated to the SDK. A flaw belongs to this layer when at least one spec-conformant implementation exhibits the flaw but a different spec-conformant implementation could avoid it.

\subsubsection{The Deployment Layer} 
The deployer of an x402 service makes operational choices that the specification and the SDK leave open: whether to settle before delivering, whether to stream content during inference, whether to expose dynamic \texttt{payTo} addresses, which facilitator to integrate with, and which wallet types to support. A flaw belongs to this layer when the same SDK could be deployed without it.

\subsection{Threat Model}
\label{sec:threat}

\subsubsection{Adversary Classes}

We consider three classes of adversary. (1) $\mathcal{A}_{\text{client}}$, the \emph{malicious shopper}, interacts with $\mathcal{M}$ and $\mathcal{B}$ at machine speed: it controls its own wallet and can issue arbitrary HTTP requests, choose its wallet type and chain, spawn concurrent threads, replay or substitute its own signatures, and craft adversarial AI inputs; it does not control other users' signing keys. It is the operative attacker for \texttt{F1}--\texttt{F5}. (2) $\mathcal{A}_{\text{net}}$, the \emph{header observer}, passively reads the \texttt{X-PAYMENT}/\texttt{PAYMENT-SIGNATURE} headers in transit (e.g., a TLS-terminating proxy or malicious mirror) and can broadcast from its own accounts; it enables \texttt{F3}'s serial-replay variant. (3) $\mathcal{A}_{\text{chain}}$, the \emph{chain-layer adversary}, performs sequencer-side ordering, the source of the bounded front-running residual $g_{\text{front}}$ that no deployment defense closes (Section~\ref{sec:mitigation}). On the deployed chains (centralized sequencers, no public mempool), the classical public-mempool MEV and mempool-watching surfaces do not apply, so we scope $\mathcal{A}_{\text{net}}$ and $\mathcal{A}_{\text{chain}}$ to the roles above.

\subsubsection{Trust Assumptions}

We trust standard cryptographic primitives (ECDSA, Keccak-256, EIP-712) and assume the blockchain $\mathcal{B}$ provides liveness and persistence at its documented finality. Crucially, we trust the Facilitator $\mathcal{F}$ only as a \textit{stateless} oracle: it verifies signatures and balances at a given snapshot but is \emph{not} assumed to track $\mathcal{M}$'s concurrent sessions or deduplicate settlement attempts absent an external mechanism. 

\subsubsection{Adversary Goals}

The objective is to violate one or more invariants \texttt{I1}--\texttt{I5}: to obtain resources without sufficient valid payment (\texttt{I1}--\texttt{I4}), or to drain $\mathcal{M}$ by making it expend compute on tasks that fail to settle or inflate its billed cost while service is preserved (\texttt{I5}).
         % §3 System Model (taxonomy, invariants, threat model)
\section{Findings}
\label{sec:vulnerability}

This section presents the five flaw classes \texttt{F1}--\texttt{F5}, organized by their layer assignment~\footnote{Adjacent concerns surfaced by community writing and vendor patches (2025-2026) are summarized in Appendix~\ref{sec:background-known-issues}.}. 

\subsection{Cross-Resource Substitution (\texttt{F1})}
\label{sec:cross-resource}

x402 commits the signed authorization to value and payee but not to the specific resource being negotiated. When a deployment uses dynamic resource resolution or shares a price tier across resources, a valid signature can be detached from its original request and re-attached to access an unauthorized resource, violating invariant \texttt{I3} (Context Binding).

\subsubsection{The Authorization Gap}

A core feature of x402 is its stateless design, allowing resource access via a standalone payment proof. The protocol relies on EIP-3009 and EIP-712 signatures, where the signed message tuple $\sigma$ commits to the payment parameters:
\begin{equation}
    \sigma = \text{Sign}_{sk}(\langle \mathcal{M}, v, n, \text{expiry} \rangle),
\end{equation}
where $\sigma$ is \textit{agnostic} to the resource identifier $ID_r$. The monetary value is fixed to the merchant, but the payment intent is decoupled from the specific resource negotiation. We call this property a \textit{floating authorization}: the financial proof can be detached from the original request for $r_a$ and re-attached to any other request $r_b$ sharing the same price.

The x402 specification delegates the binding of this financial proof to the specific HTTP resource entirely to the SDK. Concretely, the spec provides a \texttt{request} field in the payload that a conformant SDK may verify against the incoming HTTP path, but the field is \textit{not} part of the EIP-712 signed payload. The SDK is therefore the only layer at which resource binding can be enforced; if it does not check, no protocol-level mechanism catches the substitution. 

This yields the ``Pay $a$, Get $b$'' substitution. For two equal-priced resources $r_a, r_b$ owned by $\mathcal{M}$ ($P(r_a) = P(r_b) = v$), the adversary requests $r_a$, obtains the payment requirements, and generates a valid signature $\sigma_a$; it then attaches $\sigma_a$ to a fresh request for another resource $r_b$. Because $\mathcal{M}$ verifies only that $\sigma_a$ transfers $v$ to the correct merchant ($Verify(\sigma_a, v, \mathcal{M})$), it accepts the payment for $r_a$ as valid for $r_b$.

\subsubsection{Taxonomy of Implementation Failures}

Our audit reveals that this delegation systematically fails: across the four official Coinbase SDKs (TypeScript, Python, Java, Go), \textit{none} enforces a sound resource-binding check before honoring a payment. We further categorized their failures into three patterns of logic errors, with details in Fig.~\ref{fig:sdk-failures}.

\begin{figure}[!tb]
  \begin{lstlisting}[language=Python, basicstyle=\scriptsize\ttfamily, breaklines=true, mathescape=true, escapechar=|, frame=tb, numbers=left,
  xleftmargin=2em]
  # Pattern 1: predicate incompleteness
  function FindPaymentRequirement(reqs, payment):
      foreach req in reqs:
          if req.network == payment.network:
              return req
      return None

  # Pattern 2: trust in unsigned metadata
  function VerifyRequest(http_path, payload):
      json_res $\leftarrow$ payload.unsigned("resource")
      if json_res $\neq$ http_path:
          throw Error("Mismatch")
      VerifySignature(payload.signed_data)
  
  # Pattern 3: blind trust in facilitator
  function HandleRequest(req):
      valid $\leftarrow$ Facilitator.Verify(req["X-Payment"])
      if valid:
          ServeResource()
  \end{lstlisting}
  \caption{Logic failures across official SDKs. Pattern 1 (Python/TS) omits the resource check, so any same-priced requirement matches; Pattern 2 (Java) validates the request path against an \emph{unsigned} JSON field; Pattern 3 (Go) delegates entirely to the stateless facilitator.}
  \label{fig:sdk-failures}
\end{figure}

\textbf{Pattern 1: predicate incompleteness.}
In TypeScript \& Python implementations, the verification logic iterates through available payment requirements to find a match for the incoming request. However, the matching predicate relies exclusively on network parameters. The logic validates that $req.network \equiv pay.network$ and $req.scheme \equiv pay.scheme$, but explicitly omits the resource identifier. This allows a payment intended for any resource in the same network/scheme tuple to satisfy the requirement for other resources.

\textbf{Pattern 2: verification against malleable metadata.}
The Java SDK attempts to enforce context binding but fails to distinguish between authenticated data and malleable metadata. The validation logic checks if the request path matches the JSON payload (\texttt{payload.get("resource")}). However, this check is performed against the \textit{unsigned} outer JSON wrapper. Since the EIP-712 signature only covers the inner typed data (lacking the resource field), an adversary can intercept a valid payload, modify the plaintext \texttt{resource} field to match a target restricted endpoint, and forward it; the signature remains valid.

\textbf{Pattern 3: blind trust.}
The Go implementation acts as a transparent proxy, forwarding the payment payload directly to the Facilitator's API for verification. Because the Facilitator is designed as a stateless service to verify signatures and balances, it lacks awareness of the merchant's local routing context (i.e., which resource was actually requested). By offloading the verification entirely without a local assertion of the resource context, the SDK effectively treats a valid balance check as an authorization for resource access.

\subsubsection{Impact}

The direct \textit{monetary} impact is constrained: substitution needs $P(r_A) = P(r_B)$, so the merchant still receives the price, just for the wrong resource. But price rarely determines value alone, and equal-priced resources often differ in secondary attributes an adversary can exploit: \emph{inventory bypass} (a payment for an in-stock item circumvents the limits on a ``sold-out'' one), \emph{access-control bypass} (a signature reused from a public resource clears an identity gate), and \emph{rate-limit bypass} (a low-tier payment applied to a stricter high-tier endpoint). Monetary value is conserved, yet \textit{business intent} and \textit{access-control policies} are subverted.
  
\paragraph{Reproduction.} In 100 controlled rounds against a merchant, a signature minted for $r_A$ accessed an equal-priced $r_B$ in all 100 rounds, confirming the floating-authorization gap (\texttt{I3}). Composed with the \texttt{F2} race under concurrency, 48\% of trials exhibited \emph{resource swapping} (a payment intent for $r_A$ consumed to deliver $r_B$) and 6\% \emph{full service duplication} (both resources unlocked by one signature).

\paragraph{In the wild.} An exposure census of CDP resource registry, a full pagination traversal of all 24{,}875 listed resources (915 merchants across 868 hosts after de-duplication), finds the gap \textit{near-universal}. Specifically, 99.6\% of those merchants expose only the \texttt{exact} scheme, whose EIP-3009 authorization commits to payee, value, and validity window but not to the resource. Exposure is not yet exploitability, which additionally requires two equal-priced resources accepting one signature; however, we find this condition met broadly, with 38\% of hosts (331 of 868) exposing same-price sibling clusters (median 3, maximum 178 resources) and 91 of those hosts exposing at least 10 mutually substitutable resources under a single authorization. This places an empirical floor under the substitution blast radius and rebuts the objection that equal-price substitution is a corner case (method and per-host detail in Appendix~\ref{app:wild-census}).

\subsection{Duplicate-Settlement Race (\texttt{F2})}
\label{sec:f1-toctou}

The x402 specification does not require facilitator-side idempotent settlement. Concurrent requests against an x402 endpoint can therefore consume the same authorization nonce multiple times during the window between the facilitator's timeout and the chain's confirmation~\footnote{The flaw is also witnessed afterwards in production: Issue~\#1062~\cite{x402_issue_1062} reported a similar issue, and Issue~\#1805~\cite{x402_issue_1805} reports a 5-concurrent-request incident in which 4 requests received the same merchant settlement proof. Notably, on the SVM side a partial mitigation (PR~\#1468, merged) introduces a 120-second in-memory deduplication cache.}. 

\subsubsection{The Synchronization Gap}

The x402 protocol separates verification of an authorization (steps 4--5 in Fig.~\ref{fig:x402}) from settlement on the underlying blockchain (steps 7--10). This separation is a documented design choice for high-throughput interaction: a facilitator can perform stateless validation in milliseconds and defer the slower on-chain settlement asynchronously. It opens a temporal window between the moment the facilitator considers a payment ``valid'' and the moment $\mathcal{B}$ records the nonce as consumed.

The width of this window is set by two timeouts: the facilitator's internal settlement timeout (after which it reports failure to the merchant) and the block-confirmation latency of the destination chain (after which the on-chain ledger considers the transaction finalized). When the facilitator timeout is shorter than the chain's confirmation latency, the facilitator can return a \texttt{settlement\_failed} response while the transaction is still pending; if that transaction subsequently confirms, the merchant has been paid but its facilitator-side accounting registers no payment.

The window admits a stronger failure mode under concurrency: multiple requests carrying the same authorization can all clear verification before any reaches the on-chain consumption step, because the merchant streams its response optimistically as soon as the facilitator reports the payment valid. The token contract's nonce check then resolves the conflict on-chain, so exactly one transaction succeeds and the rest revert after their work has already been delivered. A single-use authorization is thereby honored more than once: every concurrent request that clears verification receives the full service while the chain settles only one, leaving the merchant uncompensated for all but one of the delivered responses (violating \texttt{I4}).

% \subsubsection{Protocol-Ambiguity Component}

% The x402 specification~\cite{x402_scheme_exact_evm} does not mandate facilitator-side idempotency, settlement-state caching, or recovery semantics. It specifies the settlement call (\texttt{transferWithAuthorization}) and the nonce-validity check (delegated to the EIP-3009 token contract, which enforces single use at the chain layer), but it makes no claim about facilitator behavior under retry, concurrent, or failed-confirmation conditions. The chain-layer nonce check is sufficient to prevent double-\textit{transfer}, but it is not sufficient to prevent double-\textit{billing}: if the facilitator interprets its first submission as failed and the user retries, both submissions reach the chain and the chain's revert occurs only on the second.

% The specification therefore leaves the facilitator with a choice: track in-flight settlements to deduplicate retries, or accept that retries can produce wasted gas and inconsistent state. Different SDKs have made different choices on this axis, and the absence of a normative requirement is the proximate cause of the bug class.

\paragraph{Reproduction.} On the official CDP facilitator on Base mainnet, accessed through our own merchant so that all economic impact is internalized, fifty rounds of 20 concurrent requests each produced duplicate service delivery in 6\% of rounds: two distinct HTTP~200 responses carrying different randomized payloads while only one settlement transaction confirmed on-chain (multiplicity $\le 2$).

\paragraph{In the wild.} \texttt{F2}'s damage is invisible on-chain: the settlement record shows a ${\approx}0.02\%$ failure rate and zero duplicate nonces. Plus, the chains x402 uses run centralized sequencers with no public mempool, so the race window is unobservable live. The same forensics expose a systemic concentration: one facilitator operator relays settlement for 195 of 263 active Base merchants (74\%), so a single facilitator fault carries ecosystem-wide reach, reinforcing the blast-point role we assign the facilitator. See Appendix~\ref{app:wild-census}.

\paragraph{A settlement-gate omission.} Our cross-adapter audit also surfaced a direct \texttt{I1} break at the SDK layer: the official Python Flask adapter gates settlement on a 2xx status, so a 302 redirect (the standard pattern for CDN and object-storage delivery) is served without settling, delivering content at zero on-chain payment with the nonce never consumed, hence invisible to on-chain or facilitator audit. Of nine first-party adapters across four languages Flask is the sole outlier; its FastAPI sibling and the other seven gate on \texttt{status} $<$ 400.

%% ---- Dynamic-pricing (upto / AI-inference) deployment: F3, F4, F5 ----
\subsection{Dynamic Pricing Scheme}
\label{sec:ai}

The findings so far concern static resources priced ex ante, where the settle-before-deliver discipline of Execution Conservation (\texttt{I5}) is a sound defense. The dominant \emph{dynamic-pricing} deployment breaks it. Its canonical instance is pay-per-token AI inference on the \texttt{upto} scheme (Thirdweb~\cite{thirdweb}), where a request's cost depends on the tokens generated and is unknown until execution completes. Unable to settle an undetermined amount before delivering, the merchant adopts a \emph{verify-execute-settle} workflow: it streams output during generation and settles afterward. Two attack surfaces follow. 

\subsubsection{Cryptographic surface ($\rightarrow$ \texttt{F3})} 
The \texttt{upto} scheme replaces the \texttt{exact} amount with a signed spending cap $V_{max}$, where the merchant may pull any $v \in [0, V_{max}]$ on completion. Unlike an \texttt{exact} payment, where the user signs a specific value and a globally unique nonce for one transaction, an \texttt{upto} authorization is a delegated allowance whose per-deduction settlements carry no nonce. Security therefore degrades from a cryptographic proof of a specific payment event to a stateful check of a remaining limit. 

\subsubsection{Temporal surface ($\rightarrow$ \texttt{F4}, \texttt{F5})} 
The verify-execute-settle workflow runs in three steps: 
\begin{enumerate}
    \item[(a)] a \textit{stateless} verification at $t_{start}$ that the user's on-chain allowance clears a safety floor (e.g., 1{,}000~wei\footnote{While ``wei'' strictly denotes $10^{-18}$~ETH, we follow the common EVM developer convention of using ``wei'' for the smallest atomic unit of the ERC-20 token in use (i.e., $10^{-6}$ for USDC).});
    \item[(b)] optimistic execution that streams tokens to the client in real time;
    \item[(c)] settlement at $t_{end}$, once the final cost is known.
\end{enumerate} 
The window $\Delta t = t_{end}-t_{start}$ leaves the merchant an unsecured creditor: the compute (e.g., GPU cycles, electricity) is irreversibly spent and the output already delivered, yet payment can still fail by allowance revocation, balance transfer, rate-limit rejection, or simply because the true cost exceeded the visible-token price.

% The x402 \texttt{upto} dynamic-pricing scheme delegates spending capability to the merchant without binding the per-deduction nonce to the cumulative consumption. Any compliant \texttt{upto} implementation inherits the flaw: the cryptographic gap is in the scheme specification itself, not in any SDK choice. 

\subsection{Allowance Overdraft (\texttt{F3})}
\label{sec:allowance}

We first examine the cryptographic surface. Under \texttt{upto}, verification is merely a snapshot check of whether the user \textit{can} pay a minimum amount, whereas settlement is a state transition that transfers the \textit{actual} amount. An adversary exploits the concurrency race between these two states to orchestrate an overdraft.
%
% \subsubsection{Mechanism: The Snapshot Fallacy}
% The vulnerability stems from the fact that the Merchant verifies solvency based on the blockchain state at $t_{start}$, which remains static across concurrent requests until one of these transactions is confirmed on-chain.

In the \texttt{upto} authorization model, the Merchant must verify that the Shopper has authorized a spending cap ($V_{max}$) sufficient to cover the potential cost of an inference task. However, because the subsequent actual costs are unknown at $t_{start}$, the verification logic simplifies to a threshold check: $V_{rem} \ge C_{min} $, where $V_{rem}$ is the remaining allowance and $C_{min}$ is a safety floor (e.g., 1000 wei as set by Thirdweb).
The vulnerability arises because this check is performed against a blockchain snapshot at block height $H$, is non-binding, and does not lock funds. An adversary sets the on-chain allowance to $V \ge C_{min}$ and launches $N$ concurrent requests whose total projected cost $\sum v_i \gg V$. Because pending transactions do not affect the allowance view until confirmed, all $N$ read the same snapshot and clear verification (step 5); the Merchant then executes and streams all $N$ outputs in real time (steps 6, 11), well before the cost is known. At settlement (step 8) the chain processes the first $k$ transactions and reverts the remaining $N-k$ that overflow $V$, but the full output has already been delivered, a theft of service for every reverted request.

This is a classic TOCTOU vulnerability rooted in the \texttt{upto} scheme's missing per-deduction nonce. Unlike the duplication race (Section~\ref{sec:f1-toctou}), bounded by the race window, this exhaustion is \textit{deterministic} once the aggregate cost exceeds the quota, and enforcing per-deduction uniqueness is structurally infeasible: the merchant cannot pre-allocate nonces for future variable-cost events.

% \subsubsection{Variants.}

\textit{Variant 1: The long-context exploit.}
Beyond concurrency, an adversary can manipulate the cost magnitude of a single request. By constructing a prompt with a massive context window or demanding a verbose output (e.g., 20,000 tokens), the adversary ensures the dynamic cost $v_{req}$ significantly exceeds the remaining allowance $V_{rem}$, provided $V_{rem} \ge C_{min}$. This triggers a deterministic settlement reversion after the expensive computation is consumed, achieving a high leverage ratio without race conditions.

\textit{Variant 2: Serial authorization replay.}
The \texttt{upto} authorization carries no per-call nonce. An adversary, or a network observer $\mathcal{A}_{\text{net}}$ who captures the header in transit, can therefore replay one authorization across many calls until the allowance is exhausted. A \$1.00 authorization at \$0.01 per call yields \textit{up to 99 unauthorized calls} at a cost-to-damage ratio of approximately zero, reproduced in 10/10 trials. This is the serial counterpart to the concurrent overdraft above; both stem from the same missing per-deduction binding.

\paragraph{Reproduction.} On the Thirdweb \texttt{x402-ai-inference} deployment (Arbitrum One), a 50-request concurrent burst on a 10{,}000-wei allowance had all requests verify and execute (47{,}277 tokens delivered $=$ 47{,}277 wei) but only 4 settle (1{,}057 wei), 46 reverting on overdraft, for a leakage ratio $\rho = 1 - (\text{settled}/\text{delivered}) = 97.76\%$. The \emph{deterministic} long-context variant drives $\rho \to 100\%$: a single request whose reversion never decrements the on-chain allowance, leaving it replayable to signature expiry (24h default).

\paragraph{In the wild.} The \texttt{upto} allowance scheme is rare (39 declared resources, 0.79\% of de-duplicated resources, across 8 merchants), yet probing all 39 with a single passive HTTP~402 live-confirms 16, concentrated in one named merchant serving \texttt{upto}-priced, OpenAI-compatible AI inference, the allowance-over-hidden-compute precondition our threat model targets (details in Appendix~\ref{app:wild-census}).

% \paragraph{Variant: Price-floor extraction.}
% Because the merchant verifies only that the remaining allowance clears a minimum floor $C_{\min}$ (e.g., 1{,}000 wei on Thirdweb), an adversary who issues short prompts each consuming exactly $C_{\min}$ extracts $\lfloor M / C_{\min} \rfloor$ paid calls per authorization---10 calls for a typical $M = 10\,C_{\min}$---each priced at the floor rather than at true cost (violating \texttt{I2}, Value Consistency). The remediation is a per-call price floor tied to true cost (G1, Section~\ref{sec:mitigation}).

\subsection{Denial of Settlement (\texttt{F4})}
\label{sec:denial}

We examine the vulnerability on the \emph{temporal surface}: the rate-limit asymmetry between service ingress (verification) and settlement egress, where a protective mechanism, i.e., the rate limit, is paradoxically turned against the very economic security it should defend.

Specifically, consider a Facilitator with a settlement rate limit $L_{settle}$ (e.g., 10~tx/s). An adversary floods the Merchant with a burst of valid requests at rate $\lambda \gg L_{settle}$. Verification (step 4) admits the entire burst, since its ingress limit $L_{verify}$ is provisioned far higher, and the Merchant executes and streams the responses (steps 6 and 11). When the Merchant then attempts to batch-settle the payments (step 7), the Facilitator's API gateway rejects everything above $L_{settle}$ with \texttt{429 Too Many Requests}. Because the resource was already delivered optimistically, this infrastructure-level rejection inflicts a deterministic loss on the Merchant, letting the adversary ``free-ride'' on throughput it never pays for.

This failure mode is acute in \textit{self-hosted or third-party deployments} built on the standard templates. Official hosted services may be provisioned with throughput high enough to mask it, but the default third-party configuration imposes a strict settlement limit ($L_{settle} = 10$~tx/s) while leaving the verification rate unsynchronized with it ($L_{verify} \gg L_{settle}$). The result is an insecure default: any provider deploying the standard template inherits a system where sustained concurrency above 10~req/s automatically triggers revenue loss and resource exhaustion. We argue that a low-tier rate limit should cause \textit{service} degradation (refusing requests), not \textit{security} degradation (delivering service for free).

\textit{Variant: client stream-interruption.}
The same deployment-layer asymmetry admits a client-driven trigger: an adversary consumes the useful output, then terminates the TCP connection \textit{before} the merchant's \texttt{onFinish} settlement callback fires. Absent checkpoint-based settlement the merchant retains no partial record to settle against, so the request is delivered free. Disconnecting after 500 streamed tokens evaded settlement on every proof-of-concept trial.

\paragraph{Reproduction.} On the same deployment, a 50-req/s burst against the \emph{default} 10-tx/s settlement limit streamed all 50 requests (100\%) but settled only 5 (10\%): 14{,}761 wei delivered at 1{,}926 wei cost, $\rho = 86.95\%$. In the limit, when the rate limit blocks every settlement, 0/50 settle, yielding 11{,}101 wei of inference at zero cost, $\rho = 100\%$.

\paragraph{In the wild.} Deliver-before-settle is a runtime behavior the registry cannot observe directly, so we report only a bracket for its declared precondition: a lower bound of the 39 \texttt{upto} streaming resources (Section~\ref{sec:allowance}), and a loose upper bound of 35\% of de-duplicated resources (1{,}748 of 4{,}933, after removing two mega-gateways) exposing variable-compute endpoints, loose because an AI endpoint that prices synchronously before delivering is not \texttt{F4}-exposed (Appendix~\ref{app:wild-census}).

\subsection{Hidden-Compute Pricing Impossibility (\texttt{F5})}
\label{sec:impossibility}

The dynamic-pricing prices a request by tokens streaming back, but a reasoning backend also consumes hidden ``thinking'' tokens that are billed to the merchant yet never transmitted to the client. We prove this gap is not an engineering defect that a better implementation could close, but a structural limit: no output-only pricing can bound the merchant's loss against an adversary who inflates the hidden compute. 

Let the merchant's pricing be a function $P(t_c)$ of the \emph{visible} output $t_c$ alone ($\mathcal{T}_c$ denotes the space of visible outputs), while its true cost $C(t_c, t_t)$ also depends on the \emph{hidden} thinking tokens $t_t$. We take $C$ to be non-decreasing in $t_t$ and at least linear in compute, so hidden and visible tokens carry the same marginal cost: appending $|t_t|$ hidden tokens to a $|t_c|$-token visible output raises cost by a factor at least $1+|t_t|/|t_c|$. Define the model's thinking-to-visible ratio on an attacker-chosen input $x$ as $\Theta(x) := \sup |t_t|/|t_c|$ over the generations the model admits on $x$, and, for one visible output $t_c$ with fiber costs ranging over $[C_{\min}, C_{\max}]$, the honest-user overcharge $\rho_{\mathrm{hon}}(x) := P(t_c)/C_{\min}$ and the adversarial manipulation ratio $\rho_{\mathrm{adv}}(x) := C_{\max}/P(t_c)$, each $=1$ when perfectly priced for that side.

\begin{theorem}[Hidden-Compute Pricing Dilemma]
  \label{thm:impossibility}
  Let $P : \mathcal{T}_c \rightarrow \mathbb{R}_{\geq 0}$ be any output-only pricing function and let $C(t_c,t_t)$ be non-decreasing in $t_t$. Suppose
  some input $x$ admits, for one visible output $t_c$, generations whose costs span $[C_{\min}, C_{\max}]$ with $C_{\max}/C_{\min} \geq 1 + \Theta(x)$.
  Then for every $P$,
  \[
  \rho_{\mathrm{hon}}(x)\cdot\rho_{\mathrm{adv}}(x) = C_{\max}/C_{\min} \geq 1+\Theta(x),
  \]
  so $\max\!\big(\rho_{\mathrm{hon}}(x),\rho_{\mathrm{adv}}(x)\big) \geq \sqrt{1+\Theta(x)}$, with equality at the geometric-mean price
  $P(t_c)=\sqrt{C_{\min}C_{\max}}$. No output-only $P$ drives both the honest overcharge and the adversarial manipulation below $\sqrt{1+\Theta(x)}$.
\end{theorem}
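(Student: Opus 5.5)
The argument is a one-line algebraic identity followed by an AM--GM-type step. The whole content rests on a single observation: because $P$ is output-only, it takes the same value $p := P(t_c)$ on every generation in the fiber over the fixed visible output $t_c$.

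First I fix the input $x$ and the visible output $t_c$ granted by the hypothesis. Since every generation in that fiber shares $t_c$, the merchant necessarily charges the same $p$ to the cheapest generation (cost $C_{\min}$, the honest user who emits no extra hidden tokens) and to the costliest one (cost $C_{\max}$, the adversary who inflates $t_t$). Monotonicity of $C$ in $t_t$ identifies these endpoints with minimal and maximal hidden compute. Substituting the definitions $\rho_{\mathrm{hon}}(x)=p/C_{\min}$ and $\rho_{\mathrm{adv}}(x)=C_{\max}/p$, the product telescopes:
\[
\rho_{\mathrm{hon}}(x)\,\rho_{\mathrm{adv}}(x)=\frac{p}{C_{\min}}\cdot\frac{C_{\max}}{p}=\frac{C_{\max}}{C_{\min}}\ \geq\ 1+\Theta(x).
\]
The product is independent of $p$, so the bound holds for every $P$.

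Next, two nonnegative reals whose product is at least $K$ cannot both be below $\sqrt K$. Hence $\max(\rho_{\mathrm{hon}},\rho_{\mathrm{adv}})\geq\sqrt{C_{\max}/C_{\min}}\geq\sqrt{1+\Theta(x)}$. For the equality claim, I set $p=\sqrt{C_{\min}C_{\max}}$. This makes both ratios equal to $\sqrt{C_{\max}/C_{\min}}$, so the max attains the product bound. It equals $\sqrt{1+\Theta(x)}$ exactly when the span hypothesis is tight, and I will state the equality in that sense. I also note that $p\mapsto\max(p/C_{\min},C_{\max}/p)$ is minimized where the two branches cross, which is this geometric-mean price.

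The main obstacle is not the algebra but the edge cases hidden in the definitions. If $p=0$, then $\rho_{\mathrm{adv}}=\infty$, and the claim holds trivially. I must also ensure $C_{\min}>0$, so that the ratios are defined. Finally, I should remark that the hypothesis $C_{\max}/C_{\min}\ge 1+\Theta(x)$ is exactly what the linear-cost modelling assumption supplies: appending $\Theta(x)|t_c|$ hidden tokens multiplies cost by at least $1+\Theta(x)$. The theorem simply takes this span as given, so no further justification is needed inside the proof.
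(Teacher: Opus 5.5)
Your proposal is correct and follows the paper's proof essentially step for step: output-only pricing makes $P$ constant on the fiber over $t_c$, so $P(t_c)$ cancels in $\rho_{\mathrm{hon}}\rho_{\mathrm{adv}} = C_{\max}/C_{\min} \geq 1+\Theta(x)$, and bounding the max by the geometric mean (the paper's AM--GM step) gives the $\sqrt{1+\Theta(x)}$ bound, with the two ratios balanced at $P(t_c)=\sqrt{C_{\min}C_{\max}}$. Your extra care is sharper than the paper's own statement: the geometric-mean price attains $\sqrt{C_{\max}/C_{\min}}$, which equals $\sqrt{1+\Theta(x)}$ only when the span hypothesis is tight, and you flag the $P(t_c)=0$ and $C_{\min}>0$ edge cases that the paper leaves implicit.
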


The argument is short~\footnote{See Appendix~\ref{app:impossibility-proof} for the full proof and formal pricing model.}: any output-only $P$ assigns one price to every generation sharing a visible output, so its only freedom is where to place that price on the cost fiber $[C_{\min},C_{\max}]$. Pricing for the honest path ($P\!\to\!C_{\min}$) leaves the merchant exposed to $\rho_{\mathrm{adv}}\!\ge\!1+\Theta$; pricing for the worst case ($P\!\to\!C_{\max}$) overcharges honest users by $\rho_{\mathrm{hon}}\!\ge\!1+\Theta$; the geometric mean balances the two at $\sqrt{1+\Theta}$, which is the best any output-only rule can do. The dilemma is \emph{unconditional}: it holds even for an oracular $P$ that perfectly counts visible tokens, since no output-only rule can read the hidden coordinate that separates the costs. Production reasoning models admit $\Theta(x) \gg 0$ on adversarial inputs, so the gap is large in practice. Only changing the cost structure escapes it; the deployment-layer defenses of Section~\ref{sec:mitigation} instead \emph{bound} the residual.

\subsubsection{Operational realization and projected ratios}
The \texttt{upto} scheme is exactly the substrate the theorem presupposes: a cap the merchant deducts against the visible stream, with no attestation of hidden compute. To gauge how large $\Theta$ is in practice, i.e., an existence-and-magnitude check on the theorem's premise, we instantiate the manipulation ratio across 10 production reasoning models (Fig.~\ref{fig:cross-model}; details in Appendix~\ref{app:cross-model}). It spans 35--3{,}213$\times$. These figures are initial projections, since here we only certify that the premise is real and large: the thinking ratio $\Theta$ is hidden by construction, so it is calibrated from published research for all ten models, while $V$ is live-measured on only five. The one fully first-party number is the visible-only inflation $V=226\times$ on the live subset, which we take as the conservative floor.

\begin{figure*}[t]
  \centering
  % ---------- (a) per-model Theta bar ----------
  \begin{subfigure}[t]{0.49\linewidth}\centering
  \begin{tikzpicture}
  \begin{axis}[x402fig, height=4.5cm, xbar, bar width=6pt,
    axis x line*=bottom, axis y line*=left, xtick pos=bottom,
    enlarge x limits=false, xmin=0, xmax=12, xtick={0,2,4,6,8,10,12},
    xlabel={Mean thinking ratio $\Theta$ ($\pm$ s.d.)},
    symbolic y coords={haiku,gem-fl,sonnet,opus,gem-pro,gpt5.2,gpt5.1,gpt5.3,o4-mini,o3},
    ytick={haiku,gem-fl,sonnet,opus,gem-pro,gpt5.2,gpt5.1,gpt5.3,o4-mini,o3},
    yticklabels={Haiku 4.5,Gemini 3 Flash,Sonnet 4.5,Opus 4.6,Gemini 3 Pro,GPT-5.2,GPT-5.1,GPT-5.3-Codex,o4-mini,o3},
    y tick label style={font=\tiny}, enlarge y limits=0.06,
    legend style={at={(0.99,0.03)}, anchor=south east, font=\tiny, draw=none, fill=none,
      cells={anchor=west}, nodes={inner ysep=1.6pt}},
    legend image code/.code={\draw[#1] (0cm,-0.045cm) rectangle (0.16cm,0.09cm);},
    error bars/error bar style={line width=0.4pt, black!55}]
  \addplot[fill=cblue, draw=cblue!55!black, line width=0.3pt, bar shift=0pt,
    error bars/.cd, x dir=both, x explicit]
    coordinates {(6.583,o3)+-(4.53,0) (5.375,o4-mini)+-(3.709,0) (5.1,gpt5.3)+-(4.665,0) (3.788,gpt5.1)+-(2.884,0) (3.535,gpt5.2)+-(2.78,0)};
  \addplot[fill=corange, draw=corange!55!black, line width=0.3pt, bar shift=0pt,
    error bars/.cd, x dir=both, x explicit]
    coordinates {(3.44,gem-pro)+-(2.984,0) (2.151,gem-fl)+-(1.631,0)};
  \addplot[fill=cgreen, draw=cgreen!55!black, line width=0.3pt, bar shift=0pt,
    error bars/.cd, x dir=both, x explicit]
    coordinates {(3.239,opus)+-(3.193,0) (2.209,sonnet)+-(1.807,0) (1.592,haiku)+-(1.321,0)};
  \legend{OpenAI, Google, Anthropic}
  \end{axis}
  \end{tikzpicture}
  \caption{Mean $\Theta$ per model ($\pm$s.d.), by provider}\label{fig:per-model-ratios}
  \end{subfigure}\hfill
  % ---------- (b) category x model heatmap ----------
  \begin{subfigure}[t]{0.49\linewidth}\centering
  \begin{tikzpicture}[x=0.66cm, y=0.45cm, font=\tiny]
    \foreach \col/\meta/\adver/\reason/\code/\creat/\fact in {%
        0/10.7/10.0/7.7/5.0/2.0/1.0,
        1/10.1/8.3/5.1/3.4/1.8/0.7,
        2/13.5/5.6/3.7/2.5/1.4/0.7,
        3/7.8/5.7/3.4/2.2/0.8/0.5,
        4/7.9/4.4/3.4/2.0/1.4/0.3,
        5/8.4/4.5/2.5/2.0/1.0/0.3,
        6/9.3/2.8/2.7/1.1/1.1/0.4,
        7/5.3/2.6/1.9/1.3/0.6/0.3,
        8/4.3/3.2/2.0/1.2/0.8/0.2,
        9/3.7/2.0/1.3/0.7/0.6/0.2} {
          \heatcell{\col}{5}{\meta}{10}{\meta}
          \heatcell{\col}{4}{\adver}{10}{\adver}
          \heatcell{\col}{3}{\reason}{10}{\reason}
          \heatcell{\col}{2}{\code}{10}{\code}
          \heatcell{\col}{1}{\creat}{10}{\creat}
          \heatcell{\col}{0}{\fact}{10}{\fact}
    }
    \foreach \col/\name in {0/o3,1/o4-mini,2/GPT-5.3-Codex,3/GPT-5.1,4/GPT-5.2,5/Gemini 3 Pro,6/Opus 4.6,7/Sonnet 4.5,8/Gemini 3 Flash,9/Haiku 4.5}
       \node[rotate=30, anchor=north east, font=\tiny] at (\col+0.5,-0.08) {\name};
    \foreach \row/\name in {5/Meta-cog.,4/Adversarial,3/Reasoning,2/Code,1/Creative,0/Factual}
       \node[anchor=east, font=\tiny] at (-0.1,\row+0.5) {\name};
  \end{tikzpicture}
  \caption{Mean $\Theta$ by category $\times$ model}\label{fig:model-cat-heatmap}
  \end{subfigure}
  \caption{Cross-model hidden-compute exposure (30 prompts $\times$ 10 models). (a)~Mean thinking-to-visible ratio $\Theta$ per model, grouped by provider: OpenAI and Google reasoning models generally expose higher $\Theta$ than Anthropic's. (b)~Mean $\Theta$ by category and model; darker cells indicate higher exposure, with the adversarial-maximizer, meta-cognitive, and reasoning categories saturating across nearly all 10 models. Both visualize the capacity ratio $\Theta$.}
  \label{fig:cross-model}
\end{figure*}

\subsubsection{Ecological validity and composition}
\label{sec:impossibility-ecological}
The limit is not a corner case input filtering could screen out. Across real prompt distributions (ShareGPT~\cite{sharegpt}, LMSYS-Chat-1M~\cite{lmsys_chat_1m}, WildChat~\cite{wildchat}), 15--25\% of natural prompts already trigger more than 5$\times$ thinking-token inflation without adversarial intent, and no input-side filter recovers what output-only pricing structurally lacks: a keyword-based classifier reaches only 67\% accuracy on evasive variants, five techniques each defeating it while retaining 91\% of the inflation, because thinking-token consumption is fixed by model internals rather than observable input features (Appendix~\ref{app:cross-model}). The limit also \emph{composes} with the implementation findings: chaining the \texttt{F3} serial-replay variant (up to 99 calls per authorization) with per-call inflation (269$\times$ live on GPT-5.3-Codex) yields $26{,}631\times$ leverage per authorization, and a single long-context request against o3 reaches a calibrated single-shot $1{,}909\times$.

\paragraph{In the wild.} Beyond the magnitude of $\Theta$ established above, the registry shows how prevalent the theorem's vulnerable configuration already is. By flaw pattern, namely fixed-price service over variable compute, the exposed surface spans up to 2{,}835 resources (covering services such as research, synthesis, code, etc.). Almost all price this compute at a single fixed \texttt{exact} amount; only 12 of those AI-inference resources use the cost-adaptive \texttt{upto} scheme, so the prevailing deployed configuration is exactly the static, visible-output-only pricing the theorem targets. Those prices are small and uniform, a median of \$0.01 (p95 \$0.50, maximum \$34) set once and unable to track the order-of-magnitude cost swings a single prompt can induce (details in Appendix~\ref{app:wild-census}).

      % §4 Findings (F1, F2 on the generic path; the upto/AI-inference deployment: F3, F4, F5)
% \section{Security Suggestion}
% \label{sec:suggestion}

\section{Defenses}
\label{sec:mitigation}
We present a targeted mitigation for each finding, plus a composable deployment-layer triple \textbf{G1}--\textbf{G3}.

\subsection{Cryptographic Context Binding (Protocol Layer)}
\label{sec:defense-context}
Cross-resource substitution (\texttt{F1}) stems from the semantic gap between a payment authorization and the resource it pays for; enforcing \texttt{I3} means removing that ambiguity from the signature itself. We therefore propose making signatures \emph{request-bound} by extending the EIP-712 schema: instead of the generic $\langle \mathcal{M}, v, n, expiry \rangle$, the agent signs $\sigma = \text{Sign}_{sk}(\mathcal{M}, v, n, expiry, \mathcal{H}(Req))$, where $Req = (\text{Method} \parallel \text{URI} \parallel \text{Body})$ is the complete HTTP request context. Committing to $\mathcal{H}(Req)$ makes the signature invalid if replayed against a different resource or a modified payload, rendering verification deterministic and independent of implementation quality, closing the semantic gap at the protocol level.

\subsection{Stateful Nonce Linearization (SDK Layer)}
\label{sec:defense-nonce}
The EVM-side duplicate-settlement race (\texttt{F2}) exploits the atomicity gap between off-chain verification (read) and on-chain settlement (write). To enforce Authorization Uniqueness (\texttt{I4}), the verification logic must transition from a stateless ``Check-then-Act'' model to a stateful ``Check-and-Set'' model. The SVM-side \texttt{SettlementCache} introduced by PR~\#1468 is the protocol's existing partial realization of this pattern; we generalize to EVM.
  
We therefore propose a lightweight \emph{pending-state} layer at the Facilitator ingress (e.g., a distributed lock) that tracks a nonce's lifecycle beyond binary used/unused states. On receiving a request with nonce $n$, the Facilitator performs an atomic check-and-lock (\texttt{SETNX}) that accepts iff $State(n)=\text{Null}$ and rejects when $n$ is already \texttt{pending} or \texttt{used}, setting $State(n)$ to \texttt{pending} on acceptance. The nonce remains \texttt{pending} until the settlement transaction is confirmed, at which point it transitions to \texttt{used}. This acts as a linearization barrier: even if $N$ concurrent requests pass the signature check, only the first to acquire the atomic lock proceeds to settlement and delivery. It eliminates the race window $\Delta t$, ensuring off-chain processing adheres to the strict serialization guarantees of $\mathcal{B}$.

\subsection{Pessimistic Delivery (Deployment Layer)}
\label{sec:defense-pessimistic}
For static assets, the defense is straightforward; however, for AI inference scenarios (Section~\ref{sec:ai}), the challenge is the ``dynamic-pricing paradox'': exact costs are unknown ex ante. We propose deployment-layer strategies that trade off user-perceived latency against financial risk.

To this end, we recommend \emph{deferred-delivery} that reorders the lifecycle so the Merchant verifies the payment, executes the inference while \emph{buffering} the output, settles, and then delivers. Unlike the current implementation, which streams immediately (immediate loss), the buffered content is released if and only if the settlement transaction confirms.  While it does not prevent server-side compute consumption (thus strictly still violating \texttt{I5}), it neutralizes the attacker's \textit{rational economic incentive}: deriving zero utility from a failed payment, the adversary is downgraded from ``theft of service'' (profitable) to ``griefing'' (irrational), shrinking the attack surface against profit-driven agents.

For high-stakes computations where even wasted compute are unacceptable, systems can instead use a pessimistic \emph{two-phase locking} (``reserve-commit''). In reserve phase, before execution at $t_{start}$, the Facilitator holds the user's funds by transferring the cap $V_{max}$ to a temporary escrow contract, making the allowance check atomic and preventing state drift such as concurrent overdrafts. In commit phase, on task completion at $t_{end}$, the Merchant triggers settlement: the escrow releases the actual cost $v$ to the Merchant and refunds the remainder $(V_{max} - v)$. This adds on-chain latency and cost but guarantees the merchant never subsidizes compute; for high-frequency use cases, the locking cost can be amortized over a session using state channels (as in L402~\cite{L402}).

Two further cryptoeconomic extensions, namely service bonds with fraud-proof slashing and reputation-tiered authorization, can raise the cost of resource griefing, but both presuppose an extra staking or identity layer (Appendix~\ref{app:speculative-defenses}).

\subsection{Failure-Closed Design (Deployment Layer)}
\label{sec:defense-closed}

The denial-of-settlement variant of \texttt{F4} highlights the danger of decoupling service delivery from billing. When infrastructure constraints (e.g., rate limits) clash with business logic, the system defaults to an insecure state (free service). To maintain Payment Integrity (\texttt{I1}), middleware must adopt a synchronous, ``failure-closed'' gatekeeping design that couples the rate limiter and billing service to the request ingress rather than the egress. Concretely, the Facilitator verifies rate limits and reserves billing capacity \textit{before} forwarding the request to the AI inference engine; if the settlement API is likely to reject the transaction under load, the request is refused immediately with \texttt{429 Too Many Requests} rather than processed and served for free. This ensures that reliability issues result in \textit{denial of service}, not \textit{denial of payment}. As a lighter deployment-layer measure, an HMAC over streamed cost and allowance metadata (keyed per session) closes a cost-display-spoofing surface, a transparency failure that erodes user trust without direct fund loss (Appendix~\ref{app:defense-hmac}).

\subsection{Provable Guarantees (Deployment Layer)}
  
We also present a triple of deployment-layer defenses (\textbf{G1}, \textbf{G2}, \textbf{G3}), each with an explicit loss bound: \textbf{G1} caps the per-call leverage, \textbf{G2} prices the hidden-compute phase, and \textbf{G3} bounds the streaming-window leakage. Composed, they bound the residual exposure across the deployment-addressable flaws \texttt{F2}--\texttt{F5}, including the \texttt{F5} pricing residual.

\subsubsection{G1 Cryptographic Rate Limiting}
\label{sec:defense-d1}
G1 caps per-call leverage by combining 3 primitives: (i) per-session nonces tracked against the EIP-3009 nonce of the corresponding payment; (ii) a per-session token cap $T_{\max}$ enforced by terminating generation once $|t_c| + |t_t| > T_{\max}$, even when the authorization could fund more; and (iii) a per-token price floor $\pi_{\min}$ applied even when the dynamic billing engine would price below it. These cap the per-call manipulation ratio: $R(x) \le \pi_{\max}/\pi_{\min}$ on inputs that consume the full session budget, and $R(x) \le |t_c|/T_{\max}$ on truncated inputs.

\subsubsection{G2 Adaptive Billing}
\label{sec:defense-d2}
Provider-native token caps (\texttt{max\_output\_tokens}, \texttt{thinking.budget\_tokens}) bound only the token volume, i.e., the worst-case output length, yet leave open the hidden-compute pricing gap that \texttt{F5} exploits, since within any budget the attacker can still maximize the hidden-to-visible ratio. G2 instead closes the gap by pricing the hidden-compute phase directly, assigning each token a differential weight that reflects its position in the stream and the model's chain-of-thought capacity: $ P_{\text{G2}}(t_c) = \sum_{i=1}^{|t_c|} w(i) \cdot \pi_i $, where $w(i)$ approximates the expected hidden-compute multiplier $\mathbb{E}[|T_t| \mid t_c[1{:}i]]$. The optimal constant weight is $\alpha^* = C_{\text{api}}^{\text{think}}/\pi_{\text{wei}}$, the merchant's thinking-phase API cost over the per-wei reward in the user's authorization, which minimizes the merchant's leakage (Theorem~\ref{thm:optimal-weight}; convexity sweep in Appendix~\ref{app:sensitivity}). It is tunable per deployment without protocol change, and its measured effect on merchant cost is reported in Section~\ref{sec:eval-defense}. Intuitively, the merchant should weight each hidden-compute token by exactly its own cost-to-revenue ratio; any higher overcharges honest users, any lower leaves the merchant absorbing the loss.

\begin{theorem}[Optimal Per-Token Weight]
\label{thm:optimal-weight}
For $P_{\text{G2}}(t_c) = \sum_i w(i) \pi_i$ and a cost function with bounded hidden-compute multiplier $\mathbb{E}[|T_t| \mid T_c]$, the choice $w(i) = \alpha^*$ with $\alpha^* = C_{\text{api}}^{\text{think}}/\pi_{\text{wei}}$ minimizes the merchant's expected leakage in the no-attacker baseline, and minimizes worst-case expected loss up to a factor $1 + \Theta(x)$ on the attack-prone input distribution.
\end{theorem}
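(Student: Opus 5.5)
We model the merchant's expected leakage under $P_{\text{G2}}$ as a function of a constant weight $w(i)=\alpha$ and show it is minimized at $\alpha^*$. The setup comes first. For a request with visible output $T_c$ and hidden tokens $T_t$, the merchant's cost is $C = |T_c|\,C_{\text{api}}^{\text{vis}} + |T_t|\,C_{\text{api}}^{\text{think}}$, which is linear in compute as assumed before Theorem~\ref{thm:impossibility}. Revenue is $P_{\text{G2}}(T_c)=\alpha\sum_{i\le |T_c|}\pi_i$. We normalize so that $\pi_i=\pi_{\text{wei}}$ per token. We then write $m(T_c):=\mathbb{E}[|T_t|\mid T_c]/|T_c|$ for the hidden-compute multiplier, which is bounded by hypothesis.

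Define the leakage as the expected shortfall $L(\alpha)=\mathbb{E}[\max(0,\,C-P_{\text{G2}})]$, together with the symmetric honest overcharge $\mathbb{E}[\max(0,P_{\text{G2}}-C)]$. The natural objective is the sum of the two, $\mathbb{E}|C-P_{\text{G2}}|$, or its squared version. In the no-attacker baseline we identify the thinking-phase price with the expected hidden cost per visible token. This is the normalization under which $\alpha^*=C_{\text{api}}^{\text{think}}/\pi_{\text{wei}}$ is read as a ``cost-to-revenue ratio'': it is the value at which $\alpha\,\pi_{\text{wei}}$ equals the merchant's thinking cost per weighted token.

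The first key step is to show convexity. $\alpha\mapsto|C-\alpha\pi_{\text{wei}}|T_c||$ is convex and piecewise linear, so its expectation is convex in $\alpha$. It therefore suffices to check first-order optimality at $\alpha^*$, that is, to show that $0$ lies in the subdifferential. The second step is that subgradient condition. Conditioning on $T_c$ and using the tower property, the subgradient at $\alpha^*$ reduces to $\pi_{\text{wei}}\,\mathbb{E}\big[|T_c|\,\mathrm{sgn}(C-\alpha^*\pi_{\text{wei}}|T_c|)\big]$. In the baseline, where honest users concentrate around the multiplier, this vanishes exactly at the calibration point. In the squared-loss version the first-order condition gives $\alpha=\mathbb{E}[C|T_c|]/(\pi_{\text{wei}}\mathbb{E}[|T_c|^2])$, which matches $\alpha^*$ under the linear cost model. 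We would present the squared-loss computation as the clean case and note that the absolute-loss case follows under the same concentration assumption. This aligns with the convexity sweep cited in Appendix~\ref{app:sensitivity}.

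The attack-prone claim comes from reusing Theorem~\ref{thm:impossibility}. On an input $x$, any fixed-weight output-only price lies on the cost fiber $[C_{\min},C_{\max}]$ with $C_{\max}/C_{\min}\le 1+\Theta(x)$ by the definition of $\Theta$. The price $\alpha^*$ is calibrated to the baseline fiber, so the adversary can push the cost to at most $(1+\Theta(x))$ times the calibrated cost. It follows that the worst-case expected loss at $\alpha^*$ is at most $(1+\Theta(x))$ times the optimal worst-case loss. Conversely, Theorem~\ref{thm:impossibility} shows that no output-only $P$ (in particular no other $\alpha$) can remove this factor entirely. We take the expectation over the attack-prone distribution at the end.

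The main obstacle is step two. It only goes through with a specific leakage functional and a normalization tying $\pi_{\text{wei}}$ to $\mathbb{E}[|T_t|\mid T_c]$, and the statement leaves both implicit. We would fix the squared-error objective and state the normalization explicitly as part of the formal pricing model. If that fails, we would fall back to the one-sided expected shortfall plus a linear overcharge penalty, for which $\alpha^*$ is the kink of a piecewise-linear convex function.
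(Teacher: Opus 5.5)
There is a genuine gap in your baseline step, and choosing a different loss functional will not fix it. In your own model revenue is $\alpha\pi_{\text{wei}}|T_c|$ and cost is $C_{\text{api}}^{\text{vis}}|T_c|+C_{\text{api}}^{\text{think}}|T_t|$. Your squared-loss first-order condition therefore gives
\[
\alpha_{\mathrm{sq}}=\frac{C_{\text{api}}^{\text{vis}}}{\pi_{\text{wei}}}+\frac{C_{\text{api}}^{\text{think}}}{\pi_{\text{wei}}}\cdot\frac{\mathbb{E}\big[|T_t|\,|T_c|\big]}{\mathbb{E}\big[|T_c|^2\big]}.
\]
For a constant multiplier $|T_t|=m|T_c|$ this is $(C_{\text{api}}^{\text{vis}}+m\,C_{\text{api}}^{\text{think}})/\pi_{\text{wei}}$. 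It equals $\alpha^*=C_{\text{api}}^{\text{think}}/\pi_{\text{wei}}$ only if $m=1-C_{\text{api}}^{\text{vis}}/C_{\text{api}}^{\text{think}}$, i.e.\ $m=0$ under the paper's equal-marginal-cost assumption. So the claim that it ``matches $\alpha^*$ under the linear cost model'' is false.

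The absolute-loss version fails for the same reason. The sign of $C-\alpha^*\pi_{\text{wei}}|T_c|$ is that of $C_{\text{api}}^{\text{vis}}|T_c|+C_{\text{api}}^{\text{think}}(|T_t|-|T_c|)$. ``Concentration around the multiplier'' puts its weighted median at zero only for that same special $m$. As long as the weight multiplies visible-token prices, the optimal constant must carry the hidden multiplier. No leakage functional can return the multiplier-free $\alpha^*$.

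The missing idea is what $w$ multiplies. The paper's proof lets the weight price the hidden-compute phase directly. With $X$ the per-request hidden-compute size, that phase earns $wX\pi_{\text{wei}}$ and costs $XC_{\text{api}}^{\text{think}}$. The signed expected leakage is then $L(w)=(\pi_{\text{wei}}w-C_{\text{api}}^{\text{think}})\,\mathbb{E}[X]$. This is linear in $w$ and vanishes iff $w=\alpha^*$, for any distribution of $X$ with $0<\mathbb{E}[X]<\infty$. For $w<\alpha^*$ hidden compute is under-priced; for $w>\alpha^*$ the user is overcharged. No convexity, subgradient, or concentration argument is needed. The ``normalization tying $\pi_{\text{wei}}$ to $\mathbb{E}[|T_t|\mid T_c]$'' you were looking for is exactly this change of model.

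For the adversarial clause, be careful about two inferences.
\begin{itemize}
\item From $C_{\max}\le(1+\Theta(x))P$ you only get a bound on the cost-to-price ratio $\rho_{\mathrm{adv}}$. You do not get that the loss at $\alpha^*$ is within a factor $1+\Theta(x)$ of the optimal worst-case loss. A large enough $w$ prices at $C_{\max}$ with zero worst-case merchant loss, so such a multiplicative comparison would force zero loss at $\alpha^*$.
\item Your inequality $C_{\max}/C_{\min}\le 1+\Theta(x)$ is the reverse of Theorem~\ref{thm:impossibility}'s hypothesis. In your cost model it needs $C_{\text{api}}^{\text{think}}\le C_{\text{api}}^{\text{vis}}$.
\end{itemize}
The paper itself only attributes this residual factor to Theorem~\ref{thm:impossibility}, as the $\rho_{\mathrm{adv}}$ side of the dilemma at an honest-path price. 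That is as far as your argument can honestly go too.
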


\subsubsection{G3 Bounded-Loss Streaming}
\label{sec:defense-d3}
G3 partitions generation into checkpoints at $\Delta$-token intervals. At each checkpoint the merchant settles the cumulative cost against the user's allowance and verifies on-chain consumption before continuing; on a failed settlement it terminates, having streamed at most $\Delta$ unsettled tokens. Therefore, the worst-case leakage of a single \texttt{F4}-class attack is at most $\Delta \cdot \pi_{\text{wei}}$. Balancing this loss cap against the per-checkpoint overhead $c_{\text{ckpt}}$ gives the optimal interval $\Delta^* = \sqrt{\bar{T} \cdot c_{\text{ckpt}}/\pi_{\text{wei}}}$ ($\bar{T}$ the mean generation length), at which the expected honest-workload overhead is $O(\sqrt{\bar{T} \cdot c_{\text{ckpt}}/\pi_{\text{wei}}})$; derivation in Appendix~\ref{app:thm-d3-proof}.

\subsubsection{Composition and Bounded Residual Risk}
\label{sec:defense-composition}
The triple $\{G1, G2, G3\}$ composes to a defense stack whose residual exposure is characterized formally.

\begin{theorem}[Bounded Residual Risk]
\label{thm:bounded-residual}
For the composed defense $G = G1 \circ G2 \circ G3$ against an adversary executing any combination of \texttt{F2}--\texttt{F5} on a merchant with parameters $(T_{\max}, w, \Delta)$, the expected per-call loss satisfies
\begin{equation}
    \mathbb{E}[\text{loss}(D)] \le \max(\pi_{\text{wei}} \cdot \Delta, g_{\text{front}}),
\end{equation}
where $g_{\text{front}}$ is the per-call front-running surplus to a chain-layer adversary ($g_{\text{front}} \approx \$0.003$ per settlement).
\end{theorem}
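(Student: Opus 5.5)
The plan is a case decomposition over the attack vectors \texttt{F2}--\texttt{F5}. For each vector we show that, once $G = G1 \circ G2 \circ G3$ is in place, the merchant's per-call loss is bounded either by the unsettled streaming window $\pi_{\text{wei}}\cdot\Delta$ or by the chain-layer residual $g_{\text{front}}$. The max-bound then follows from the fact that one call's loss comes from at most one dominant channel.

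The reason is structural. Under G3 every call is a sequence of $\Delta$-token segments, each settled before the next begins. So the loss of a call is the loss on its last, unsettled segment, plus whatever the chain adversary extracts from the settlements that did confirm. If ordering surplus is only realizable on a confirmed settlement, and a failed checkpoint terminates the call without loss beyond the open segment, then these two contributions are mutually exclusive per call. That gives the max rather than a sum. This step needs care, and if it fails we would settle for the sum, which is within a factor two.

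The individual cases run as follows.

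\textbf{\texttt{F4}.} Rate-limit rejection or client disconnect makes a checkpoint settlement fail. G3 then terminates generation, so at most $\Delta$ tokens were streamed unpaid, giving loss at most $\pi_{\text{wei}}\Delta$.

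\textbf{\texttt{F3}.} Concurrent overdraft and serial replay are handled the same way. Each checkpoint verifies on-chain consumption against the live allowance, so an overflowing request reverts at its first checkpoint past exhaustion. G1's per-session nonces tie each session to one EIP-3009 nonce, so a replayed header cannot open a fresh session. The reverted request therefore again leaks at most $\Delta$ tokens.

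\textbf{\texttt{F2}.} Duplicate settlement of one nonce is blocked by G1(i). A second concurrent call carrying the same nonce is refused before generation, at zero loss. Any delivery that does happen precedes its checkpoint settlement and is capped at $\Delta$ tokens.

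\textbf{\texttt{F5}.} Hidden-token inflation is where G2 enters. We price each visible token with weight $w=\alpha^*$ from Theorem~\ref{thm:optimal-weight}, and G1(ii) truncates generation once $|t_c|+|t_t|>T_{\max}$. Each checkpoint then charges, for its segment, at least the merchant's thinking-phase API cost, so the revenue deficit per settled segment is nonpositive. Any deficit is confined to the final, unsettled segment. To make this precise, we would also count hidden tokens toward the $\Delta$ interval, which is exactly what G1's $|t_c|+|t_t|$ accounting enables. The loss is then again $\le\pi_{\text{wei}}\Delta$.

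Finally, we invoke the threat model's $\mathcal{A}_{\text{chain}}$, the sequencer-ordering adversary. Its surplus on any confirmed settlement is by definition $g_{\text{front}}$, and no deployment defense touches it. Taking expectations over the adversary's randomization and the call distribution then yields the stated bound.

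The main obstacle is \texttt{F5}. Theorem~\ref{thm:optimal-weight} only guarantees optimality in expectation, and only up to a factor $1+\Theta(x)$ on attack-prone inputs. Theorem~\ref{thm:impossibility} shows that no output-only weighting removes the hidden-compute gap. So the plan cannot rely on G2 alone. It must use G3 checkpointing on \emph{total} tokens (visible plus hidden), so that the uncompensated hidden compute in any call is bounded by the $\Delta$-token window regardless of $\Theta$. We would first try to show that checkpoint settlement can be charged on provider-reported cumulative usage, which makes the hidden coordinate visible at the deployment layer and escapes the output-only hypothesis of Theorem~\ref{thm:impossibility}. If that is not available, the bound would need to be weakened to $\max\big((1+\Theta)\pi_{\text{wei}}\Delta/T_{\max}\cdot T_{\max},\,g_{\text{front}}\big)$ with $\Theta$ truncated by G1.
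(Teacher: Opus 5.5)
Your per-flaw case split is finer than the paper's own argument. The paper only says that the composition inherits each component's bound (G1 caps leverage at $\pi_{\max}/\pi_{\min}$, G2 ``zeroes the expected per-request leakage at $w=\alpha^*$'', G3 caps streaming leakage at $\Delta\,\pi_{\text{wei}}$), and then reads off $\max(\pi_{\text{wei}}\Delta, g_{\text{front}})$. However, two of your steps do not go through, and that sketch does not repair them.

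The first is the exclusivity argument for the max. You yourself write a call's loss as the open-segment loss \emph{plus} what $\mathcal{A}_{\text{chain}}$ extracts from the settlements that confirmed. Under G3, a call that fails at checkpoint $j\ge 2$ has already had $j-1$ confirmed checkpoint settlements, so both terms are present in the same call. Moreover, if $g_{\text{front}}$ is per settlement, as the statement's parenthetical says, checkpointing multiplies the ordering term by roughly $\bar T/\Delta$. Falling back to the sum gives $\pi_{\text{wei}}\Delta+g_{\text{front}}\le 2\max(\pi_{\text{wei}}\Delta,g_{\text{front}})$, which is not the stated inequality. Keeping the max needs an argument that the two losses cannot both be positive in one call, and neither you nor the paper supplies one.

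The second gap is \texttt{F5}, where your route genuinely differs from the paper's. The paper disposes of inflation through G2 alone. You rightly object that Theorem~\ref{thm:optimal-weight} claims zero leakage only in the no-attacker baseline, and that Theorem~\ref{thm:impossibility} rules out closing the gap with any output-only $P_{\text{G2}}$. Notice, though, that the computation in Appendix~\ref{app:thm-d2-proof} bills $w\,\pi_{\text{wei}}$ per \emph{realized} hidden token $X$. At $w=\alpha^*$ the leakage $(\pi_{\text{wei}}\alpha^*-C_{\text{api}}^{\text{think}})X$ therefore vanishes for every $X$, adversarial or not. The paper is tacitly using exactly the non-output-only billing on provider-reported usage that you leave open, and you should adopt it as an explicit hypothesis.

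Even granting that hypothesis, zero per-request leakage materializes only if the request settles, and the thinking phase precedes visible output. Suppose G3's checkpoint clock counts streamed tokens. The adversary can drive $|t_t|$ toward $T_{\max}$, revoke its allowance (or trip the settlement rate limit) while the model thinks, and fail the first checkpoint. This leaks about $T_{\max}C_{\text{api}}^{\text{think}}$, which G1(ii) caps but $\Delta$ does not. So checkpointing on hidden tokens during the thinking phase is a necessary extra hypothesis on G3: it requires incremental visibility of $|t_t|$ and the ability to halt at a checkpoint. Even then, the open window is worth up to $\max(1,\alpha^*)\,\pi_{\text{wei}}\Delta$, unless $\alpha^*\le 1$ or checkpoints fire on accumulated weighted charge rather than raw token count.

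The same front-loading breaks your fallback, which simplifies to $(1+\Theta)\pi_{\text{wei}}\Delta$. The hidden cost incurred before the first checkpoint scales with the whole planned output, not with $\Delta$. The correct fallback is therefore $\pi_{\text{wei}}\Delta+C_{\text{api}}^{\text{think}}\min(\Theta(x)|t_c|,T_{\max})$, a $T_{\max}$-type bound from G1(ii).
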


\noindent The residual is a single small term: one checkpoint of streaming, or the irreducible on-chain front-running surplus (${\approx}\,\$0.003$), and it is bought without taxing legitimate traffic (Lemma~\ref{lem:welfare}).
  
\begin{lemma}[Honest-User Welfare]
\label{lem:welfare}
Under $G^*$ applied to an honest user (no adversarial inputs), the expected pricing $\mathbb{E}[P_{\text{G2}}]$ exceeds the baseline $P(\cdot)$ by at most the checkpoint-overhead bound $O(\sqrt{\bar{T} \cdot c_{\text{ckpt}}/\pi_{\text{wei}}})$; its measured value is in Section~\ref{sec:eval-defense}.
\end{lemma}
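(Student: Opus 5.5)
The plan is to decompose the honest user's expected charge under $G^*$ into the three layers and bound each layer's excess over the baseline $P(\cdot)$ separately. I read $G^*$ as the composition $G1\circ G2\circ G3$ with its parameters set to their optimal values: $w\equiv\alpha^*$ from Theorem~\ref{thm:optimal-weight} and $\Delta=\Delta^*$ from G3.

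The first contribution is G1. On an honest input, the token cap $T_{\max}$ is chosen above the honest generation length, so truncation never fires. The price floor $\pi_{\min}$ binds only when the dynamic engine would price below it, and I will assume $\pi_{\min}$ is set at or below the baseline per-token price. Under these two assumptions G1 contributes zero excess. The per-session nonce is pure bookkeeping and carries no price.

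The second contribution is G2. By Theorem~\ref{thm:optimal-weight}, $w=\alpha^*$ minimizes expected leakage in the no-attacker baseline. That means $\alpha^*$ weights each visible token by the merchant's true thinking-phase cost-to-revenue ratio. Then
\[
\mathbb{E}[P_{\text{G2}}(T_c)]=\alpha^*\,\mathbb{E}\Big[\sum_i\pi_i\Big]
\]
coincides with the expected true cost on honest traffic, i.e.\ with a cost-recovering baseline $P$. The step I will have to state carefully is the identification of ``baseline $P(\cdot)$'' with the merchant's cost-recovering price. With that identification, G2's honest excess is zero in expectation. The only possible deviation is the gap between the constant weight and the position-dependent multiplier $\mathbb{E}[|T_t|\mid t_c[1{:}i]]$, and this gap averages out by the tower property over the honest distribution.

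The third contribution is G3, and this is the term that survives. A generation of length $T$ incurs $\lceil T/\Delta\rceil$ checkpoints at cost $c_{\text{ckpt}}$ each. If this overhead is passed through in price units, the expected surcharge is
\[
\frac{c_{\text{ckpt}}}{\pi_{\text{wei}}}\Big(\frac{\bar T}{\Delta}+1\Big).
\]
The bounded-loss term $\Delta$ appears alongside it, either as the rounding slack of settling the last partial interval or from the balancing in the G3 derivation. I will take the objective as $\Delta+\bar T c_{\text{ckpt}}/(\pi_{\text{wei}}\Delta)$ and minimize it by AM--GM. The minimum is at $\Delta^*=\sqrt{\bar T c_{\text{ckpt}}/\pi_{\text{wei}}}$, with value $2\sqrt{\bar T c_{\text{ckpt}}/\pi_{\text{wei}}}=O(\sqrt{\bar T c_{\text{ckpt}}/\pi_{\text{wei}}})$. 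I use Jensen's inequality or linearity, since the overhead is linear in $T$, to pass from the mean $\bar T$ to the expectation; the ceiling adds only an $O(1)$ term.

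Summing the three contributions gives the lemma. The main obstacle is the G2 step: Theorem~\ref{thm:optimal-weight} is phrased as minimizing leakage, not as unbiasedness of price. My first approach is to show that the leakage functional is convex in $w$ with a stationary point where expected revenue equals expected cost, so that at $\alpha^*$ the honest expected price equals expected cost. If the theorem only guarantees near-optimality, I would instead absorb the resulting bounded bias into the $O(\cdot)$ constant.
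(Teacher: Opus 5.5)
This is essentially the paper's own argument: G2 at $w=\alpha^*$ has zero expected leakage on honest traffic by Theorem~\ref{thm:optimal-weight}, and G3 at $\Delta^*$ adds only the checkpoint overhead; your explicit G1 step and your identification of the baseline with the cost-recovering price spell out assumptions the paper leaves tacit. Two simplifications are available: the G2 ``obstacle'' dissolves because the proof of Theorem~\ref{thm:optimal-weight} computes the leakage as $(\pi_{\text{wei}}w-C_{\text{api}}^{\text{think}})\,\mathbb{E}[X]$ ($X$ the hidden-compute size), which is affine in $w$ and vanishes exactly at $\alpha^*$, so no convexity argument is needed (and your bias-absorption fallback would not have been valid, since such a bias does not vanish as $c_{\text{ckpt}}\to 0$); and in G3 the honest user pays only the checkpoint term, which at $\Delta^*$ is $\bar T c_{\text{ckpt}}/(\pi_{\text{wei}}\Delta^*)\le\sqrt{\bar T c_{\text{ckpt}}/\pi_{\text{wei}}}$ up to your ceiling correction, so the extra $\Delta$ term you add (which is the attack-loss cap, not an honest charge) is unnecessary, though harmless for an upper bound.
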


Further guarantees are in Appendix~\ref{app:composition-proof}: that $G^*$ is a subgame-perfect Stackelberg equilibrium, that the full stack preserves \texttt{I1}--\texttt{I5} up to the bounded front-running residual, and that, because G1--G3 act on disjoint phases, blocking any single link defeats a composed attack chain.

\begin{figure*}[t]
  \centering
  %% (a) Per-call overhead
  \begin{subfigure}[t]{0.32\linewidth}\centering
  \begin{tikzpicture}
  \begin{axis}[x402fig, xbar, bar width=6pt, y dir=reverse, xmin=0, xmax=35,
    symbolic y coords={Rate Lim,Adapt Bill,Bnd-Loss,Combined},
    ytick={Rate Lim,Adapt Bill,Bnd-Loss,Combined},
    yticklabels={G1 (rate-lim.),G2 (adapt.),G3 (bnd-loss),Combined},
    xlabel={Overhead (ms)}, enlarge y limits=0.18,
    nodes near coords, nodes near coords align={horizontal},
    every node near coord/.append style={font=\tiny}]
  \addplot[fill=cblue, draw=cblue!55!black, line width=0.3pt, bar shift=0pt] coordinates {(12,Rate Lim)};
  \addplot[fill=cgreen, draw=cgreen!55!black, line width=0.3pt, bar shift=0pt] coordinates {(0.8,Adapt Bill)};
  \addplot[fill=corange, draw=corange!55!black, line width=0.3pt, bar shift=0pt] coordinates {(16,Bnd-Loss)};
  \addplot[fill=cpurple, draw=cpurple!55!black, line width=0.3pt, bar shift=0pt] coordinates {(28,Combined)};
  \end{axis}
  \end{tikzpicture}
  \caption{Per-call overhead}\label{fig:overhead}
  \end{subfigure}\hfill
  %% (b) Throughput scaling
  \begin{subfigure}[t]{0.32\linewidth}\centering
  \begin{tikzpicture}
  \begin{axis}[x402fig, xmode=log, log basis x={2},
    xtick={1,2,4,8,16,32,64}, xticklabels={1,2,4,8,16,32,64},
    xlabel={Clients}, ylabel={Throughput (req/s)}, ymin=0, ymax=20,
    legend style={at={(0.03,0.95)}, anchor=north west}, legend columns=1]
  \addplot[cblue, mark=*, solid, line width=0.9pt, mark size=1.1pt]
    coordinates {(1,1.0)(2,1.95)(4,3.7)(8,6.8)(16,11.2)(32,15.1)(64,16.8)};
  \addplot[cgreen, mark=triangle*, dashed, line width=0.9pt, mark size=1.1pt]
    coordinates {(1,0.97)(2,1.88)(4,3.5)(8,6.3)(16,10.2)(32,13.8)(64,15.4)};
  \legend{Baseline, Defended}
  \end{axis}
  \end{tikzpicture}
  \caption{Throughput scaling}\label{fig:throughput}
  \end{subfigure}\hfill
  %% (c) Settlement latency CDF
  \begin{subfigure}[t]{0.32\linewidth}\centering
  \begin{tikzpicture}
  \begin{axis}[x402fig, xlabel={Latency (s)}, ylabel={Cumulative fraction},
    xmin=0, xmax=5, ymin=0, ymax=1.05, no markers,
    legend style={at={(0.97,0.05)}, anchor=south east}, legend columns=1]
  \addplot[const plot, cblue, solid, line width=0.9pt]
    coordinates {(0.3022,0.002)(0.5591,0.036)(0.6274,0.07)(0.6782,0.104)(0.7235,0.138)(0.7682,0.174)(0.7967,0.208)(0.8291,0.242)(0.8691,0.276)(0.8956,0.31
  )(0.9231,0.346)(0.9806,0.38)(1.0122,0.414)(1.0556,0.448)(1.096,0.482)(1.134,0.518)(1.1575,0.552)(1.204,0.586)(1.2446,0.62)(1.2749,0.654)(1.3404,0.69)(1.
  3887,0.724)(1.4371,0.758)(1.5066,0.792)(1.5575,0.826)(1.6693,0.862)(1.8345,0.896)(2.0157,0.93)(2.3185,0.964)(5.1608,1.0)};
  \addlegendentry{Baseline}
  \addplot[const plot, cgreen, dashed, line width=0.9pt]
    coordinates {(0.4521,0.002)(0.6425,0.036)(0.7097,0.07)(0.7498,0.104)(0.8164,0.138)(0.8597,0.174)(0.9008,0.208)(0.9336,0.242)(0.9576,0.276)(0.9795,0.31
  )(1.0059,0.346)(1.0432,0.38)(1.078,0.414)(1.132,0.448)(1.1486,0.482)(1.1913,0.518)(1.2364,0.552)(1.2565,0.586)(1.2856,0.62)(1.3282,0.654)(1.3804,0.69)(1
  .4373,0.724)(1.4748,0.758)(1.5217,0.792)(1.6135,0.826)(1.704,0.862)(1.8131,0.896)(1.997,0.93)(2.191,0.964)(2.9193,1.0)};
  \addlegendentry{Defended}
  \end{axis}
  \end{tikzpicture}
  \caption{Settlement latency CDF}\label{fig:settlement-latency}
  \end{subfigure}
  \caption{Defense performance. (a)~Per-call overhead by defense; (b)~Throughput scaling vs.\ concurrent clients. (c)~Settlement-latency
  CDF (cumulative distribution function) on Arbitrum; defense overhead does not affect on-chain confirmation.}
  \label{fig:perf}
\end{figure*}
  
\subsubsection{Effectiveness and Overhead}
\label{sec:eval-defense}

We realize the per-layer mitigations and the deployment-layer triple $\{G1, G2, G3\}$ described above in a reference x402 facilitator and merchant of approximately 4{,}000 lines of TypeScript, and evaluate the operator-deployable triple against \texttt{F2}--\texttt{F5}\footnote{\texttt{F1} is excluded by design: cross-resource substitution is closed only by committing the resource into the signed authorization, a protocol- and SDK-layer correction to the payload schema.}. We drive it with a 500-request workload spanning five traffic categories at 100 requests each (Fig.~\ref{fig:cost-distribution}), chosen to stress the defenses and characterize their overhead under heterogeneous load.

\begin{figure}[t]
    \centering
    % Cost distribution by workload category — base vs defended (native pgfplots)
\begin{tikzpicture}
\begin{axis}[x402fig, ybar, bar width=8pt,
  ymin=0, ymax=10, height=3cm, 
  symbolic x coords={Norm,Reas,Intr,Bnd,Fuzz}, xtick=data,
  xticklabels={Normal,Reasoning,Stream,Boundary,Adversarial},
  x tick label style={rotate=8, anchor=north east, font=\scriptsize},
  ylabel={Cost (\$, $\times10^{-3}$)},
  enlarge x limits=0.13,
  legend style={at={(0.97,0.95)}, anchor=north east},
  legend columns=2]
\addplot[fill=cblue, draw=cblue!55!black, line width=0.3pt]
  coordinates {(Norm,2.34)(Reas,8.72)(Intr,1.87)(Bnd,1.98)(Fuzz,3.12)};
\addplot[fill=cgreen, draw=cgreen!55!black, line width=0.3pt]
  coordinates {(Norm,2.34)(Reas,4.62)(Intr,1.80)(Bnd,1.98)(Fuzz,2.89)};
\legend{Baseline, Defended}
\end{axis}
\end{tikzpicture}
    \caption{Baseline vs.\ defended per-call cost by workload category. Adaptive billing (G2, $\alpha{=}0.1$) cuts reasoning-query cost by a calibrated 47\% (\$0.00872$\to$\$0.00462) by pricing the ${\sim}5{,}800$ hidden thinking tokens per call at $\alpha$.}
    \label{fig:cost-distribution}
\end{figure}

\textbf{Effectiveness.} The composed defense mitigated every tested attack: \textit{no successful attack in 500 adversarial trials}, which bounds the per-trial attack-success probability at $0.6\%$~\footnote{A zero-failure experiment cannot certify a true success probability of exactly zero. With $0$ successes in $n = 500$ independent trials, the rule of three~\cite{hanley1983nothing} gives a one-sided 95\% upper confidence bound of ${\approx}\,3/n = 0.6\%$ on that probability; we report this bound rather than assert a zero rate.}. The mechanisms instantiate the duplicate-settlement (\texttt{F2}), allowance-overdraft (\texttt{F3}), stream-abort (\texttt{F4}), and inflation and think-dispute (\texttt{F5}) vectors, together with the chain-layer front-running and metadata cost-spoofing. The \texttt{F5} pricing residual is bounded, not eliminated: $G3$ caps streaming loss at $P_{\text{wei}}\Delta = \$0.001$ per interruption (mean \$0.00047 over 100 measured interruptions), and adaptive billing (G2) cuts the merchant's per-call reasoning cost by 47\% (Fig.~\ref{fig:cost-distribution}). 

\textbf{Overhead.} The composed defense adds approximately 28~ms of latency to a typical request (2.8\% of the mean 1{,}000~ms inference time); Fig.~\ref{fig:overhead} decomposes it per defense. G1 contributes 12~ms (per-request nonce verification), G2 contributes $<$1~ms (per-position weight computation), and G3 contributes 16~ms (8~ms per checkpoint; the typical 2{,}340-token call sees 2 checkpoints at $\Delta = 1{,}000$). Therefore, the honest-user welfare loss (Lemma~\ref{lem:welfare}) is 2.8\%. Beyond per-call latency, the full stack reduces peak throughput by only 8.3\% (16.8 to 15.4 req/s at 64 concurrent clients) and leaves on-chain settlement unaffected (median 1.1~s, p95 2.3~s on Arbitrum), and raises settlement gas by only 29.5\% (${\approx}\$0.002$ per settlement; cost structure in Appendix~\ref{app:cost-composition}); see Fig.~\ref{fig:perf}.

\textbf{Economic effect.} \label{sec:eval-economics} We quantify attacker incentive via the cost-to-damage ratio and return-on-investment (ROI). Without defenses, attacker ROI ranges from 17$\times$ (Haiku~4.5) to 610$\times$ (Gemini~3~Pro). Projecting to 50M monthly transactions at a conservative 5\% adversarial rate, the per-model overcharge ranges from \$3K/month (Haiku~4.5) to \$1.18M/month (GPT-5.3-Codex; Appendix~\ref{app:extended-data})~\footnote{We stress this scale projection is \emph{analytical}, conditioned on the calibrated thinking ratios, not measured.}. Under the projection, the composed defense drops attacker ROI below 1$\times$ for every model, removing the economic incentive. Table~\ref{tab:eval-summary} consolidates the composed defense's effect across the attack surface.

\begin{table}[t]
  \centering
  \caption{Evaluation summary: the composed defense removes the attacker's economic incentive at small overhead.}
  \label{tab:eval-summary}
  \small
  \begin{tabular}{lrr}
  \toprule
  Metric & Undefended & Defended \\
  \midrule
  Attack mitigation  & 0\%         & 100\% (0/500) \\
  Per-interruption loss          & uncapped    & \$0.00047 ($\le\$0.001$) \\
  Visible-only inflation         & 226$\times$ & $C_{\max}$-capped \\
  Attacker leverage & 8.7$\times$ & 0.9$\times$ \\
  Aggregate overcharge    & 100\%       & ${\approx}95\%$ reduction \\
  Defense overhead               & ---         & 28~ms (2.8\%) \\
  \bottomrule
  \end{tabular}
\end{table}

      % §5 Defenses 
\section{Discussion}
\label{sec:discussion}

\subsection{Limitations}
\label{sec:discussion-limits}
Our cross-model measurements (Section~\ref{sec:impossibility}) calibrate the thinking ratio from published research; because we cannot verify the calibrated ratios per call, we report the calibrated peak (3{,}213$\times$ on o3) separately from the live peak (1{,}880$\times$ on GPT-5.3-Codex). With $n=5$ adversarial prompts per model, individual per-model peaks should be read as effect magnitude, though the cross-model effect is robust ($p<0.007$, Cohen's $d\geq3.3$). On the defense side, the parameters $(T_{\max}, w^*, \Delta^*)$ are optimal only on our measured workload. We also separate the \emph{measured} defense results (the 47\% reasoning-cost reduction and the overhead of 28~ms, 8.3\% throughput, and 29.5\% gas) from the \emph{analytical} ones: the $8.7{\times}{\to}0.9{\times}$ leverage inversion is a Stackelberg-equilibrium prediction with workload-dependent constants, and the aggregate-overcharge figures are scale projections that inherit the calibrated thinking ratios. Finally, our in-the-wild study (Section~\ref{sec:vulnerability}) reports \emph{exposure}, vulnerable configuration that upper-bounds the attack surface, rather than demonstrated exploitation, and is a point-in-time snapshot of a registry that drifts.

\subsection{Future Directions}
\label{sec:discussion-future}
Several directions follow. \emph{Cross-protocol composition}: deployments increasingly compose MCP, x402, AP2, and A2A, whose joint security is essentially unexplored. \emph{TEE-bound settlement}: hardware attestation (e.g., TDX, SEV-SNP) can observe the hidden compute and embed it in a signed price, closing the cost-observability gap that our defenses can only bound. \emph{Active measurement}: our in-the-wild study (Section~\ref{sec:vulnerability}) is a \emph{passive} exposure census; an \emph{active} scan in the style of Wang et al.~\cite{5958046} that confirms exploitability per endpoint remains future work, gated on the ethics of testing third-party services. \emph{Formal verification}: the x402, MPP, and AP2 \texttt{exact} schemes are amenable to TLA+/Spin model checking, for which our 3-layer taxonomy offers a natural property decomposition. \emph{Privacy}: agent identity and payment provenance are an open surface that zero-knowledge primitives could close.

    % §6 Discussion
\section{Conclusion}
\label{sec:conclusion}

x402 has crossed from prototype to production infrastructure for the agentic web, where its security is now operationally at stake. We presented a systematic analysis around five security invariants and a 3-layer taxonomy, identifying four implementation flaws (\texttt{F1}--\texttt{F4}) and a structural hidden-compute pricing impossibility (\texttt{F5}), each paired with a layered defense carrying provable bounds on the residual. The agentic-payment substrate of the AI economy is being built currently, on protocols whose deployed configurations frequently exceed their documented safe envelope; we hope this flaw catalog and methodology help close that gap.
    % §7 Conclusion

%-------------------------------------------------------------------------------
\bibliographystyle{ACM-Reference-Format}
\bibliography{ref}

%%
%% Appendix
\appendix

\section{Ethics and Disclosure}
\label{sec:discussion-ethics}

\textbf{Responsible disclosure.} We disclosed the four implementation-layer findings (\texttt{F1}--\texttt{F4}) to Coinbase (November~2025, via HackerOne) and Thirdweb (January~2026, via email). \texttt{F5} is a structural impossibility result whose hidden-compute mechanism is already public; it targets merchants' output-only pricing configurations rather than any AI provider's own systems, and therefore required no provider notification.

\textbf{Controlled experimentation.} All attacks ran against resources and accounts owned by the authors; no third-party services were exploited, no user funds drained, and no production infrastructure disrupted. Blockchain interactions used testnets (Base Sepolia, Arbitrum Sepolia) wherever possible; the limited mainnet validation on Arbitrum One used minimum-value transactions ($<$100{,}000~wei USDC, i.e.\ $<$\$0.10 each), capped below \$1.00 cumulative across the entire evaluation.

\section{Pricing Impossibility: Model, Proof, and Extensions}
\label{app:impossibility-proof}

This appendix supplies the formal material behind the compact treatment of \texttt{F5} in Section~\ref{sec:impossibility}: the pricing model, the full proof, and its positioning against prior auditing approaches.

\subsection{Formal Pricing Model}
\label{sec:impossibility-model}
A merchant $\mathcal{M}$ serving a hidden-compute backend offers a service whose pricing is a function $P : \mathcal{T}_c \rightarrow \mathbb{R}_{\geq 0}$, where $\mathcal{T}_c$ is the space of \textit{client-visible} outputs. $P$ is \textit{output-only} when its value is determined entirely by the visible output, with no input from any signal not derivable from $\mathcal{T}_c$. The merchant's true cost is $C : \mathcal{T}_c \times \mathcal{T}_t \rightarrow \mathbb{R}_{\geq 0}$, where $\mathcal{T}_t$ is the space of \textit{hidden} backend computation (e.g., reasoning tokens consumed during chain-of-thought that contribute to cost but are not transmitted to the client); $C$ is non-decreasing in both arguments. For a model whose generation admits a chain-of-thought phase, the \textit{thinking-to-visible capacity ratio} for input $x$ is
\begin{equation}
    \Theta(x) := \sup_{(t_c, t_t) \in \text{Gen}(x)} \frac{|t_t|}{|t_c|},
\end{equation}
where $\text{Gen}(x)$ is the set of admissible (visible, thinking) token pairs the model may produce on $x$ under its standard decoding policy. The per-request \textit{manipulation ratio} is $R(x) := C(t_c, t_t)/P(t_c)$, with $R(x)=1$ a perfectly priced request and per-request exposure $C - P = (R(x)-1)\,P(t_c)$.

\subsection{Theorem and Proof}
We restate Theorem~\ref{thm:impossibility} and provide the full proof.

\begin{theorem*}[Restatement of Theorem~\ref{thm:impossibility}]
  Let $C(t_c, t_t)$ be non-decreasing in both arguments and $P : \mathcal{T}_c \rightarrow \mathbb{R}_{\geq 0}$ any output-only pricing function. Fix a
  visible output $t_c$ and write $C_{\min} := \inf_{t_t} C(t_c,t_t)$, $C_{\max} := \sup_{t_t} C(t_c,t_t)$ over the generations the model admits for some
  input $x$, with $C_{\max}/C_{\min} \geq 1+\Theta(x)$. Then the honest-user overcharge $\rho_{\mathrm{hon}} := P(t_c)/C_{\min}$ and the adversarial
  manipulation $\rho_{\mathrm{adv}} := C_{\max}/P(t_c)$ satisfy $\rho_{\mathrm{hon}}\,\rho_{\mathrm{adv}} \geq 1+\Theta(x)$, hence
  $\max(\rho_{\mathrm{hon}},\rho_{\mathrm{adv}}) \geq \sqrt{1+\Theta(x)}$, with equality iff $P(t_c)=\sqrt{C_{\min}C_{\max}}$.
\end{theorem*}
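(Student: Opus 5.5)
The plan is a direct algebraic argument on a single cost fiber. The key structural point comes first: because $P$ is output-only, every admissible generation $(t_c,t_t)\in\mathrm{Gen}(x)$ sharing the fixed visible output $t_c$ is charged the same scalar $p := P(t_c)$. The pricing rule therefore has exactly one degree of freedom on this fiber, and the whole proof reduces to choosing a nonnegative real $p$ against the interval $[C_{\min},C_{\max}]$.

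Before the algebra, I will dispose of degenerate cases.
\begin{itemize}
\item We need $C_{\min}>0$ so that $\rho_{\mathrm{hon}}$ is defined. This follows because $C$ is at least linear in compute and $|t_c|\ge 1$; any zero-cost fiber is excluded by convention.
\item If $p=0$, then $\rho_{\mathrm{adv}}=\infty$ and every inequality holds trivially. So we assume $p>0$.
\end{itemize}

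With $p>0$, the product identity is immediate by cancellation:
\[
\rho_{\mathrm{hon}}\,\rho_{\mathrm{adv}}=\frac{p}{C_{\min}}\cdot\frac{C_{\max}}{p}=\frac{C_{\max}}{C_{\min}}\ge 1+\Theta(x),
\]
where the last step is the hypothesis on the fiber. I will remark here that monotonicity of $C$ in $t_t$ plays no role in the inequality itself. It only justifies the hypothesis: the cheapest generation (smallest $t_t$) attains $C_{\min}$, while the generation attaining the supremum in $\Theta(x)$ raises cost by a factor of at least $1+|t_t|/|t_c|$. Taking the supremum gives $C_{\max}/C_{\min}\ge 1+\Theta(x)$ under the paper's linear-cost assumption.

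The max bound then follows from the elementary fact that $\max(a,b)^2\ge ab$ for positive reals $a,b$. Hence $\max(\rho_{\mathrm{hon}},\rho_{\mathrm{adv}})\ge\sqrt{C_{\max}/C_{\min}}\ge\sqrt{1+\Theta(x)}$.

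For the equality clause, set $\kappa:=C_{\max}/C_{\min}$. Then $\max(a,b)=\sqrt{ab}$ holds iff $a=b$, that is, iff $p/C_{\min}=C_{\max}/p$, that is, iff $p=\sqrt{C_{\min}C_{\max}}$. This choice makes the maximum equal to $\sqrt{\kappa}$, the minimum over $p$. The stated ``equality'' with $\sqrt{1+\Theta(x)}$ therefore additionally requires $\kappa=1+\Theta(x)$ exactly. I will state this precisely: the geometric mean is the unique minimizer of $\max(\rho_{\mathrm{hon}},\rho_{\mathrm{adv}})$ over $p$, and it attains $\sqrt{1+\Theta(x)}$ when the hypothesis is tight.

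The final sentence of the theorem, that no output-only $P$ does better, is then just the contrapositive applied at this $x$ and $t_c$.

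The main obstacle is not the algebra but the interpretive step: making precise that ``output-only'' forces a constant price across the fiber, and that the quantifiers work out. $C_{\min}$ and $C_{\max}$ are an infimum and a supremum, so the ratios may be approached rather than attained. I would handle this by letting $\rho_{\mathrm{hon}}$ and $\rho_{\mathrm{adv}}$ be defined via these extremal values, exactly as in the restatement, so no attainment argument is needed. I would also be careful to stay consistent with the iff clause by phrasing equality in terms of $\sqrt{C_{\max}/C_{\min}}$.
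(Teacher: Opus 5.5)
Your argument is correct and is essentially the paper's proof: the price is constant on the fiber over $t_c$, $P(t_c)$ cancels in $\rho_{\mathrm{hon}}\rho_{\mathrm{adv}}$ to give $C_{\max}/C_{\min}\ge 1+\Theta(x)$, and AM--GM gives the $\sqrt{1+\Theta(x)}$ bound, with balance at $P(t_c)=\sqrt{C_{\min}C_{\max}}$. Your observation that equality with $\sqrt{1+\Theta(x)}$ additionally requires $C_{\max}/C_{\min}=1+\Theta(x)$ is a correct sharpening of the paper's ``iff'' clause. Your side remark deriving the hypothesis from the linear-cost assumption does not hold in general, because the supremum defining $\Theta(x)$ ranges over all of $\mathrm{Gen}(x)$ and not just the fiber over $t_c$; drop it, since the theorem simply assumes the hypothesis.
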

  
\begin{proof}
  \textbf{Step 1: $P$ is constant on the fiber.} Let $\pi : \mathcal{T}_c \times \mathcal{T}_t \rightarrow \mathcal{T}_c$ drop the hidden coordinate.
  By \textit{output-only}, $P$ factors through $\pi$: there is $\bar P$ on $\mathcal{T}_c$ with $P(t_c,t_t)=\bar P(t_c)$ for all $t_t$. So every
  generation on the fiber $\pi^{-1}(t_c)$ is priced at single value $\bar P(t_c)$.
  
  \textbf{Step 2: the product is fixed by the cost spread.} Since $\bar P(t_c)$ cancels,
  \[
  \rho_{\mathrm{hon}}\,\rho_{\mathrm{adv}} = \frac{\bar P(t_c)}{C_{\min}}\cdot\frac{C_{\max}}{\bar P(t_c)} = \frac{C_{\max}}{C_{\min}} \geq 1+\Theta(x),
  \]
  independent of $P$. By AM--GM, $\max(\rho_{\mathrm{hon}},\rho_{\mathrm{adv}}) \geq \sqrt{\rho_{\mathrm{hon}}\rho_{\mathrm{adv}}} \geq
  \sqrt{1+\Theta(x)}$, with equality when $\rho_{\mathrm{hon}}=\rho_{\mathrm{adv}}$, i.e. $\bar P(t_c)=\sqrt{C_{\min}C_{\max}}$. Thus pricing for the
  honest path ($\bar P\!\to\!C_{\min}$) forces $\rho_{\mathrm{adv}}\geq 1+\Theta(x)$, pricing for the worst case ($\bar P\!\to\!C_{\max}$) forces 
  $\rho_{\mathrm{hon}}\geq 1+\Theta(x)$, and no output-only $P$ pushes both below $\sqrt{1+\Theta(x)}$.
\end{proof}

\paragraph{Information-theoretic interpretation.} The argument can be recast in entropy terms. Treating generation as a random process over $\mathcal{T}_c \times \mathcal{T}_t$, the joint entropy $H(T_c, T_t) = H(T_c) + H(T_t \mid T_c)$. When the model admits multiple thinking paths $t_t, t_t'$ with the same $t_c$, $H(T_t \mid T_c) > 0$, and the projection map $\pi$ destroys information. Output-only pricing $P$ acts on the projected space and is blind to this information; cost $C$ does not factor through $\pi$. The manipulation ratio bound is then a direct consequence of the conditional-entropy gap.

\subsection{Positioning vs Prior Auditing Approaches}
\label{sec:impossibility-prior-work}

A line of recent work has investigated whether the hidden compute can be made observable through auditing rather than through pricing structure. We position our impossibility against this work and argue that the theorem holds even when the auditing approach succeeds at its stated goal.

\paragraph{Token-count auditing (CoIn).} Sun et al.'s CoIn~\cite{coin} proposes a verifiable hash tree on reasoning-token embeddings, achieving 94.7\% detection of token-count inflation. CoIn solves a specific problem (the merchant under-reports its actual reasoning-token count to inflate charges) that is orthogonal to the threat we model: CoIn defends \textit{users} against \textit{merchants} who lie about token count, while our impossibility concerns \textit{merchants} who are vulnerable to \textit{users} who induce extra hidden compute. Both are real threat models, but they do not subsume one another. Even under perfect CoIn-style audit (the merchant's reported token count is provably correct), our impossibility holds: the attack operates on the cost-per-token via the hidden-to-visible capacity ratio rather than via token-count fraud. We treat CoIn as a complementary defensive mechanism in Section~\ref{sec:mitigation} rather than as a competing impossibility.

\paragraph{Existence-proof of the threat: BadThink.} Liu et al.'s BadThink~\cite{badthink} demonstrates a backdoor that inflates chain-of-thought by 17$\times$ while preserving the visible output. This is a concrete realization of the threat class our theorem formalizes: a controllable mechanism that increases hidden compute without changing visible output. Where BadThink is an existence proof, our theorem is a structural impossibility: BadThink's attack is one instantiation; the theorem says \textit{any} hidden-compute capacity admits inflation strategy.

We treat both CoIn and BadThink as complementary points of comparison rather than competing results; further related work appears in Appendix~\ref{app:extended-related}.

\section{Defense Proofs and Formal Analysis}
\label{app:defense-proofs}

\subsection{Optimal Per-Token Weight: Proof Sketch (Theorem~\ref{thm:optimal-weight})}
\label{app:thm-d2-proof}

Let $X$ denote the per-request hidden-compute size, with distribution $F_X$. The merchant's per-request leakage with weight $w$ is
\[
L(w) = \mathbb{E}[w \cdot X \cdot \pi_{\text{wei}} - X \cdot C_{\text{api}}^{\text{think}}] = (\pi_{\text{wei}} w - C_{\text{api}}^{\text{think}}) \cdot \mathbb{E}[X].
\]
$L(w) = 0$ requires $w = C_{\text{api}}^{\text{think}}/\pi_{\text{wei}} = \alpha^*$. For $w < \alpha^*$, the merchant under-prices hidden compute (positive expected leakage); for $w > \alpha^*$, the user is over-charged. The unique zero-leakage choice is $w = \alpha^*$.

In the adversarial setting, the choice $w = \alpha^*$ minimizes worst-case expected leakage up to a factor of $1 + \Theta(x)$ on the attack-prone input distribution; the additional factor is contributed by Theorem~\ref{thm:impossibility} and cannot be eliminated by any choice of $w$ alone.

\subsection{Bounded-Loss Streaming: Optimal Checkpoint Interval (Proof Sketch)}
\label{app:thm-d3-proof}

For checkpoint interval $\Delta$, the total expected cost has two components: per-checkpoint settlement overhead $c_{\text{ckpt}}$ paid for each of $\bar{T}/\Delta$ checkpoints in expectation, and per-attack streaming-window leakage $\Delta \cdot \pi_{\text{wei}}$:
\[
\text{Cost}(\Delta) = (\bar{T}/\Delta) c_{\text{ckpt}} + p_{\text{attack}} \cdot \Delta \cdot \pi_{\text{wei}}.
\]
This is convex in $\Delta$ with minimum at $\Delta^* = \sqrt{\bar{T} c_{\text{ckpt}}/(\pi_{\text{wei}} p_{\text{attack}})}$. The leakage bound $\Delta \cdot \pi_{\text{wei}}$ is tight on a single attack.

\subsection{Composition: Proofs and Auxiliary Guarantees}
\label{app:composition-proof}

\paragraph{Bounded residual risk (Theorem~\ref{thm:bounded-residual}).} The composed defense $G = G1 \circ G2 \circ G3$ inherits the bound of each component: G1 bounds per-call leverage to $\pi_{\max}/\pi_{\min}$; G2 zeroes the expected per-request leakage at $w = \alpha^*$; G3 bounds per-attack streaming leakage to $\Delta \cdot \pi_{\text{wei}}$. The composition's residual is therefore bounded by $\max(\pi_{\text{wei}} \Delta, g_{\text{front}})$, where $g_{\text{front}}$ is the chain-layer front-running surplus that no deployment-layer defense addresses.

\paragraph{Honest-user welfare (Lemma~\ref{lem:welfare}).} G2 produces zero expected leakage on the no-attacker distribution (Theorem~\ref{thm:optimal-weight}); G3 adds only the $O(\sqrt{\bar{T} c_{\text{ckpt}}/\pi_{\text{wei}}})$ overhead, whose measured value on our workload is reported in Section~\ref{sec:eval-defense}.

The remaining three guarantees, summarized in Section~\ref{sec:defense-composition}, are stated and proved here in full.

\begin{theorem}[Economic Dominance]
\label{thm:dominance}
In the two-player Stackelberg game between the merchant (leader, choosing parameters $(T_{\max}, w, \Delta)$ ex-ante) and the attacker (follower, choosing input $x$ given the parameters), the strategy $G^* = G1 \circ G2 \circ G3$ with $(T_{\max}^*, w^*, \Delta^*)$ is a subgame-perfect equilibrium under the modeled single-axis deviations: any merchant deviation strictly increases expected loss, and any attacker deviation strictly decreases expected gain.
\end{theorem}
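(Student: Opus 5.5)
The plan is backward induction, which is how subgame perfection is verified in a two-stage Stackelberg game. First I fix the merchant's parameters $(T_{\max}, w, \Delta)$ and compute the attacker's best response $x^*(T_{\max}, w, \Delta)$. Then I show that the merchant's choice $(T_{\max}^*, w^*, \Delta^*)$ minimizes expected loss given that best response.

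Because the statement only claims robustness to single-axis deviations, I will not search the joint strategy space. I will take each coordinate in turn, holding the other two at their starred values, and prove a strict inequality along that axis.

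\textbf{Payoffs.} The attacker's gain is the per-call leakage $\mathrm{gain}(x) = C(t_c,t_t) - P_{\text{G2}}(t_c)$ summed over the phases G1--G3 govern. The merchant's loss is the same quantity plus overhead: $\mathrm{Loss} = (\bar T/\Delta)c_{\text{ckpt}} + p_{\text{attack}}\,\mathbb{E}[\mathrm{gain}(x^*)]$.

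Since G1, G2 and G3 act on disjoint phases, I decompose the gain into three terms:
\begin{itemize}
\item a leverage term, capped by G1 at $\pi_{\max}/\pi_{\min}$ (or at $|t_c|/T_{\max}$ under truncation);
\item a hidden-compute pricing term $(C_{\text{api}}^{\text{think}} - w\pi_{\text{wei}})\,X$;
\item a streaming term, at most $\Delta\,\pi_{\text{wei}}$.
\end{itemize}

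\textbf{Attacker stage.} Against $G^*$, the attacker maximizes $X$ (it drives $|t_t|/|t_c|$ toward $\Theta(x)$), saturates $T_{\max}$, and interrupts the stream right before a checkpoint. Theorem~\ref{thm:bounded-residual} then caps the best-response gain at $\max(\pi_{\text{wei}}\Delta^*, g_{\text{front}})$.

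To show that any attacker deviation strictly lowers gain, I argue that each such deviation moves off a maximizer of a monotone component:
\begin{itemize}
\item using fewer hidden tokens lowers the pricing term whenever $w^* < \alpha^*$, and leaves it unchanged when $w^* = \alpha^*$;
\item interrupting earlier loses streamed tokens;
\item staying below $T_{\max}$ reduces leverage.
\end{itemize}
Strictness will need a tie-breaking convention, or a small positive cost of attack effort, for the case where a component is already zeroed.

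\textbf{Merchant stage.} I treat the three axes separately.
\begin{itemize}
\item \emph{Weight $w$.} Invoke Theorem~\ref{thm:optimal-weight} and its proof. The leakage $(\pi_{\text{wei}}w - C_{\text{api}}^{\text{think}})\mathbb{E}[X]$ vanishes exactly at $w^*=\alpha^*$. Lowering $w$ creates positive leakage; raising it over-charges honest users, which I count as loss through the welfare term of Lemma~\ref{lem:welfare}.
\item \emph{Checkpoint interval $\Delta$.} Use the strict convexity of $\mathrm{Cost}(\Delta)$ from the G3 derivation. Its minimizer $\Delta^*$ is unique, so any deviation strictly increases cost.
\item \emph{Token cap $T_{\max}$.} Raising $T_{\max}$ raises the worst-case leverage cap. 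Lowering it truncates honest calls and adds overhead.
\end{itemize}

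\textbf{Main obstacle.} I expect the $T_{\max}$ axis to be the hardest. Unlike $w$ and $\Delta$, it has no closed-form objective in the paper, so I would first write an explicit trade-off $\mathrm{Loss}(T_{\max}) = p_{\text{attack}}\,\pi_{\text{wei}}\,T_{\max}\,\Theta + \lambda\,\Pr[\bar T > T_{\max}]$. I would then assume a continuous honest-length distribution with positive density, which makes this objective strictly quasi-convex, and define $T_{\max}^*$ as its minimizer.

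The other delicate point is strictness on the attacker side when a gain component is already zero. I would handle it by stating the equilibrium with weak inequalities for the zeroed components, and strict inequalities elsewhere under the effort-cost assumption.
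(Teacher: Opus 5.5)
Your skeleton is the paper's own. Its proof just asserts that merchant deviations raise loss ``by the convexity of each component cost'' and that attacker deviations lower gain by the per-component bounds of Theorems~\ref{thm:optimal-weight}--\ref{thm:bounded-residual}. Spelling it out, as you do, shows that this skeleton does not deliver the strict clauses, and your patches do not close the gap.

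On the follower side the failure sits at the equilibrium itself, not at a corner. In the per-hidden-token model of G2 that you (and the proof of Theorem~\ref{thm:optimal-weight}) use, $w^*=\alpha^*$ makes $(C_{\text{api}}^{\text{think}}-w^*\pi_{\text{wei}})X\equiv 0$. So the attacker is exactly indifferent over $X$, and your case $w^*<\alpha^*$ never occurs. Likewise, aborting just before \emph{any} checkpoint leaves the same $\Delta$ unsettled tokens, so the streaming axis has a whole set of maximizers. In that model the strict attacker clause is simply false.

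Your effort-cost patch restores strictness only by changing the game and the best response. With zero payoff from hidden tokens and a positive per-token effort cost, the unique best response is \emph{minimal} $X$ and an abort before the first checkpoint, not ``maximize $X$, saturate $T_{\max}$''. The merchant stage you build on the latter must then be redone. What your plan can deliver is subgame perfection with weak inequalities (which is all SPE requires), not the statement as written.

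Your $\Delta$ axis is fine: strict convexity of the G3 cost gives a unique minimizer. On the other two axes you never fix one objective.
\begin{itemize}
\item \emph{The $w$ axis.} With $\mathrm{Loss}=(\bar T/\Delta)c_{\text{ckpt}}+p_{\text{attack}}\mathbb{E}[\mathrm{gain}(x^*)]$, raising $w$ above $\alpha^*$ makes the hidden term non-positive and never raises loss. An honest-overcharge penalty such as $\mu\,|\pi_{\text{wei}}w-C_{\text{api}}^{\text{think}}|\,\mathbb{E}[X]$ must sit in the objective on every axis, not be borrowed from Lemma~\ref{lem:welfare} for $w$ alone.
\item \emph{The $T_{\max}$ axis.} This breaks outright. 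Holding $w=w^*$, your decomposition supplies no attack cost that grows with $T_{\max}$. The hidden term vanishes, and the streaming cap $\Delta\pi_{\text{wei}}$ does not grow with $T_{\max}$. The leverage term is a dimensionless ratio, so it cannot be added to monetary terms, and its cited cap, $\pi_{\max}/\pi_{\min}$ or $|t_c|/T_{\max}$, is constant or \emph{decreasing} in $T_{\max}$, contrary to your claim.
\item \emph{Your $T_{\max}$ objective.} The term $p_{\text{attack}}\pi_{\text{wei}}T_{\max}\Theta$ is the hidden-compute leakage with G2 switched off, not a single-axis deviation from $G^*$. Without it only the truncation penalty remains, which is non-increasing in $T_{\max}$, so no finite $T_{\max}^*$ is a strict minimizer.
\item \emph{Quasi-convexity.} Even granting your objective, a positive density does not make $aT+\lambda\Pr[T_{\text{hon}}>T]$ strictly quasi-convex. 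Its derivative $a-\lambda f(T)$ can change sign repeatedly for multimodal $f$. You need, e.g., a strictly decreasing density on the relevant range, or an assumed unique minimizer. Also, $\Pr[\bar T>T_{\max}]$ should involve the random honest length, not its mean.
\end{itemize}

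A consistent repair is to model G2 as genuinely output-only, charging an expected hidden multiplier, as the $1+\Theta(x)$ caveat of Theorem~\ref{thm:optimal-weight} implies. That yields a positive, $X$-increasing attack term and a real $T_{\max}$ trade-off. But then the optimality of $w^*=\alpha^*$ has to be re-derived with $p_{\text{attack}}$ in the objective, rather than quoted from the no-attacker baseline.
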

\noindent\textit{Proof.} Any merchant deviation from $(T_{\max}^*, \alpha^*, \Delta^*)$ strictly increases worst-case expected loss by the convexity of each component cost; any attacker deviation strictly decreases expected gain by the per-component bounds of Theorems~\ref{thm:optimal-weight}--\ref{thm:bounded-residual}. \qed

\begin{theorem}[Defense Completeness]
\label{thm:completeness}
Under the trust assumptions of Section~\ref{sec:threat}, the composition $G^* = G1 \circ G2 \circ G3$, together with the per-layer mitigations of Sections~\ref{sec:defense-context}--\ref{sec:defense-closed}, preserves invariants \texttt{I1}--\texttt{I5} for all protocol executions against adversaries $\mathcal{A}_{\text{client}}$, $\mathcal{A}_{\text{net}}$, and $\mathcal{A}_{\text{chain}}$, with the chain-layer front-running residual bounded by Theorem~\ref{thm:bounded-residual}.
\end{theorem}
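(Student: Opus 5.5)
The plan is an invariant-by-invariant argument: for each of \texttt{I1}--\texttt{I5} I name the mechanism that enforces it and show that every flaw violating it (Table~\ref{tab:summary}) is either eliminated or reduced to the residual already bounded by Theorem~\ref{thm:bounded-residual}. I fix an arbitrary protocol execution and an arbitrary adversary from $\{\mathcal{A}_{\text{client}},\mathcal{A}_{\text{net}},\mathcal{A}_{\text{chain}}\}$. I then reason over the lifecycle of Fig.~\ref{fig:x402}, in which the adversary can act only through the HTTP channel, the headers it observes, or transaction ordering. The facilitator is treated as a stateless oracle, following Section~\ref{sec:threat}.

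\textbf{\texttt{I3}.} Under the request-bound schema of Section~\ref{sec:defense-context}, $\sigma$ commits to $\mathcal{H}(Req)$. If $\sigma$ verifies for a request $r_b \neq r_a$, then either EIP-712/ECDSA unforgeability is broken or $\mathcal{H}$ has a collision. Both are excluded by the trust assumptions, so \texttt{F1} and all three SDK patterns close.

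\textbf{\texttt{I4}.} The \texttt{SETNX} check-and-set of Section~\ref{sec:defense-nonce} linearizes nonce consumption. Among $N$ concurrent requests carrying nonce $n$, exactly one observes $State(n)=\text{Null}$; the others see \texttt{pending} or \texttt{used} and are refused before delivery. This gives $\sum \mathbb{I}(req,n)\le 1$ and rules out \texttt{F2}. For \texttt{upto}, G1(i) attaches a per-session nonce to every deduction, which closes the serial-replay variant of \texttt{F3}.

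\textbf{\texttt{I1}.} I argue by the delivery gate. With deferred delivery or reserve-commit (Section~\ref{sec:defense-pessimistic}) and failure-closed ingress (Section~\ref{sec:defense-closed}), content is released only after a confirmed $Tx$ to \texttt{payTo}. This covers the 429 path of \texttt{F4} and the redirect omission, provided the settlement gate is taken as \texttt{status}$<400$.

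\textbf{\texttt{I2} and \texttt{I5}.} The escrow of $V_{max}$ places funds in the \texttt{LOCKED} state before execution, so $Status(Tx)\in\{\texttt{CONFIRMED},\texttt{LOCKED}\}$ holds whenever compute begins. This blocks both the concurrent and the long-context forms of \texttt{F3}. G2 with $w=\alpha^*$ gives zero expected leakage on the honest distribution (Theorem~\ref{thm:optimal-weight}), and G3 caps any shortfall at $\Delta\pi_{\text{wei}}$.

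The main obstacle is \texttt{I2}/\texttt{I5} under \texttt{F5}. Theorem~\ref{thm:impossibility} shows that no output-only price can make $v_{tx}\ge P(r)$ hold exactly against hidden-token inflation, so the statement cannot mean literal preservation for that flaw. My first attempt is to read ``preserve'' for the \texttt{F5} component as ``preserve up to the bounded residual''. Two facts support this reading. First, G1's token cap $T_{\max}$ truncates hidden compute, which bounds $C_{\max}$ on each fiber. Second, the escrow locks $V_{max}\ge C_{\max}$, so value consistency holds against the locked amount rather than the visible-token price, and what remains is exactly the $\max(\pi_{\text{wei}}\Delta,g_{\text{front}})$ term of Theorem~\ref{thm:bounded-residual}.

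The second delicate point is $\mathcal{A}_{\text{chain}}$. Sequencer reordering can front-run a \texttt{transferWithAuthorization}. Here I would argue that the recipient is still \texttt{payTo}, because the signature fixes the payee, so \texttt{I1} survives; the only loss is the surplus $g_{\text{front}}$, which the theorem explicitly leaves as a bounded exception.

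Finally, I combine the invariant cases with a union argument over the flaw classes. This is valid because G1--G3 act on disjoint phases, so a composed attack chain must break some single link, and each link has already been handled above.
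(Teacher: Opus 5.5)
\textbf{Verdict: there is a genuine gap, in the \texttt{F5} case of \texttt{I2}/\texttt{I5}, and a second error in your \texttt{I1} case.}

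\textbf{The \texttt{F5} step.} The reserve-commit escrow makes whatever amount $v$ the billing rule computes \emph{collectible}: it ``releases the actual cost $v$ \ldots\ and refunds $V_{max}-v$''. But it does not set $v$, and inside $G^*$ the rule that sets $v$ is G2. In its main-text form $\sum_{i\le|t_c|}w(i)\pi_i$, G2 is itself a function of $t_c$ alone, so Theorem~\ref{thm:impossibility} applies verbatim. The merchant is therefore underpaid on the adversarial fiber unless it prices at the top of the fiber. That would overcharge honest users by $C_{\max}/C_{\min}\ge 1+\Theta(x)$, contradicting $w=\alpha^*$ and Lemma~\ref{lem:welfare}.

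``Value consistency holds against the locked amount'' thus conflates collateral with charge. $T_{\max}$ only caps the resulting shortfall at roughly the cost of $T_{\max}$ tokens (G1's leverage bound), not at $\pi_{\text{wei}}\Delta$. For the same reason, ``G3 caps any shortfall at $\Delta\pi_{\text{wei}}$'' fails: G3 caps unsettled delivery, not the gap between price and cost.

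The missing idea is \emph{metering}. Settlement must charge provider-reported hidden consumption; this is the appendix form of G2, with expected leakage $(\pi_{\text{wei}}w-C_{\text{api}}^{\text{think}})\mathbb{E}[X]$ in the hidden size $X$, which vanishes at $w=\alpha^*$. G3's checkpoints must also count and settle hidden as well as streamed tokens, since the hidden phase typically precedes the first visible token. Only then is unpaid compute confined to one $\Delta$-window. That is the ``change the cost structure'' escape from Theorem~\ref{thm:impossibility}, not a consequence of the lock. The paper's proof never uses the escrow: it takes \texttt{I2} from G2's weighting (loss ratio $(C_{\text{api}}-\alpha\pi_{\text{wei}})/(C_{\text{api}}-\pi_{\text{wei}})$, decreasing in $\alpha$), and \texttt{I5} from G3's commit-or-rollback lock with per-interruption loss at most $\pi_{\text{wei}}\Delta$.

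\textbf{The \texttt{I1} case.} Stated with ``or'', it fails on the reserve-commit branch. The transfer confirmed before execution goes to the escrow contract, not to \texttt{payTo}, so \texttt{I1} is met only at commit, after delivery. Deferred delivery alone leaves \texttt{I5} violated, as Section~\ref{sec:defense-pessimistic} itself concedes. You need both, and that fully pessimistic pipeline contradicts G3 as defined, which streams up to $\Delta$ tokens ahead of the covering settlement. Either have G3 buffer each window until its checkpoint confirms; then \texttt{I1} is exact and what remains of $\pi_{\text{wei}}\Delta$ is wasted compute, not free delivery. Or accept a $\Delta$-token exception, which is what the paper's per-interruption bound amounts to.

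\textbf{How your route compares.} Once repaired, your invariant-by-invariant decomposition is a genuinely different route from the paper's per-defense one. The paper uses G1's nonce check and price floor for \texttt{I1}/\texttt{I4}, G2 for \texttt{I2}, G3 for \texttt{I5}, and a vacuous \texttt{I1} under front-running; you instead use the fact that the payee is signed. Your route also supplies an \texttt{I3} argument that the paper's proof omits entirely. One correction there: against $\mathcal{A}_{\text{client}}$, who holds $sk_s$, you need that one signature cannot verify two digests under the signer's own key, not unforgeability. For EOAs this follows from ecrecover's recovery id plus Keccak collision resistance.

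\textbf{The closing union step.} It adds nothing. Non-interference of G1--G3 does not make Table~\ref{tab:summary} an exhaustive list of ways to break \texttt{I1}--\texttt{I5}, so ``for all executions'' must rest on your direct per-invariant arguments.
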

\noindent\textit{Proof.} The result follows from three per-defense arguments. G1's monotonic nonce check and per-call price floor preserve \texttt{I1}/\texttt{I4} against serial replay and price-floor extraction: the per-call cap bounds calls per authorization to $\kappa$ and total spend to $\kappa\cdot C_{\max}\cdot\pi_{\text{wei}}\le M$. G2's differential weighting bounds the metering loss ratio $L_{G2}/L_{\text{base}} = (C_{\text{api}} - \alpha\pi_{\text{wei}})/(C_{\text{api}} - \pi_{\text{wei}})$, monotonically decreasing in $\alpha$ for $\pi_{\text{wei}}<C_{\text{api}}$, preserving \texttt{I2}. G3's checkpointing preserves \texttt{I5} with per-interruption loss $\le \pi_{\text{wei}}\cdot\Delta$ via an atomic commit-or-rollback settlement lock. For $\mathcal{A}_{\text{chain}}$ front-running, a successful front-run diverts execution before the settled state, so \texttt{I1} holds vacuously and the economic loss is bounded by Theorem~\ref{thm:bounded-residual}. \qed

\begin{theorem}[Defense Non-Interference]
\label{thm:non-interference}
G1, G2, and G3 operate on disjoint state partitions and protocol phases (G1 on the authorization phase, G2 on the metering-settlement phase, and G3 on the streaming phase), so enabling any one defense does not weaken the guarantee of another. Defense-in-depth therefore holds: even if G1's nonce check were bypassed, G3's checkpointing still bounds financial exposure.
\end{theorem}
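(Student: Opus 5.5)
The plan is to prove Theorem~\ref{thm:non-interference} as a separation argument. First make precise what ``guarantee'' means for each defense: G1 guarantees the per-call leverage bound $R(x)\le\pi_{\max}/\pi_{\min}$ (or $|t_c|/T_{\max}$ on truncated inputs) and nonce uniqueness; G2 guarantees zero expected metering leakage at $w=\alpha^*$ (Theorem~\ref{thm:optimal-weight}); G3 guarantees per-interruption loss at most $\pi_{\text{wei}}\Delta$. Then show two facts. \emph{Each guarantee is a predicate over the state variables and transitions of a single phase only.} \emph{The other defenses never write those variables, and never constrain them beyond what the guarantee already assumes.}

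\textbf{Step 1: the state partition.} Write the merchant/facilitator state as $\Sigma=\Sigma_{\text{auth}}\times\Sigma_{\text{meter}}\times\Sigma_{\text{stream}}$. Here $\Sigma_{\text{auth}}$ holds the nonce table, the session token counter and $\pi_{\min}$. $\Sigma_{\text{meter}}$ holds the weights $w(i)$ and the accumulated price $P_{\text{G2}}$. $\Sigma_{\text{stream}}$ holds the checkpoint counter, the unsettled-token count and the settlement lock. Then check, defense by defense, that its transition function writes only its own component and reads the others at most in a read-only way. One such read is G3 reading the current price to settle at a checkpoint. Because the phases occur in the order authorization, then streaming/metering, then settlement, the combined execution is a sequential product of the three transition systems, interleaved only at checkpoints.

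\textbf{Step 2: each bound is preserved under composition.} Every component bound was derived assuming the other mechanisms are absent, that is, under the most adversarial behavior of the other phases. So adding another mechanism can only shrink the set of reachable executions, and a bound that holds universally on a larger set holds on any subset. This monotonicity is the core lemma. I would argue it by showing that each defense only terminates or refuses executions, through truncation, rejection, or halting on failed settlement. None of them introduces new transitions.

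\textbf{Step 3: the defense-in-depth corollary.} If G1's nonce check is bypassed, the adversary can replay authorizations. Each replayed call still passes through G3's phase, and G3's bound is per interruption and independent of $\Sigma_{\text{auth}}$. So the exposure per unsettled call stays at most $\pi_{\text{wei}}\Delta$, and the failed settlement halts streaming.

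\textbf{Main obstacle: the G2--G3 coupling in Step 1.} G3 settles the \emph{cumulative G2 price} at each checkpoint, so its loss cap depends on the settled amount tracking cost, which is G2's output. I would handle this by stating G3's guarantee in units of streamed tokens: at most $\Delta$ unsettled tokens, priced at $\pi_{\text{wei}}$. In that form it holds for any price G2 computes. The read dependence then does not weaken it. A change in $w$ only rescales the amount already settled, never the unsettled window.
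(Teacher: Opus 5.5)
Your route is the paper's own. The paper gives no argument for Theorem~\ref{thm:non-interference} beyond the disjointness clause in its statement, and you set out to verify that clause. The verification breaks in two places.

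First, the monotonicity lemma in Step~2 assumes every defense only truncates, rejects, or halts executions. G2 does none of these: it rewrites the price $P_{\text{G2}}$ attached to every execution, and G1's floor $\pi_{\min}$ overrides that same price. The principle ``a bound that holds on a set of executions holds on any subset'' says nothing about guarantees that are functions of the price, and two of your three guarantees are such functions. G1's leverage bound $R(x)=C/P\le\pi_{\max}/\pi_{\min}$ is a statement about the very $P$ that G2 writes. G2's identity $L(\alpha^*)=0$ is actually broken by G1: wherever $\pi_{\min}$ exceeds the $\alpha^*$-weighted per-token price, the charge exceeds the merchant's cost, so expected leakage is no longer zero. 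That is an honest-user overcharge, exactly what Lemma~\ref{lem:welfare} is meant to control. Your justification that each bound was derived ``assuming the other mechanisms are absent, that is, under the most adversarial behavior of the other phases'' equates absence with worst case, which is the thing to be proved. At best you can recover a weaker theorem. That would mean restating every guarantee as a one-sided merchant-loss inequality monotone in price, adding a compatibility condition between G2's weights and G1's price band, and giving up G2's exact zero-leakage.

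Second, the partition is not disjoint, and the coupling that matters for the final clause is G1--G3, not G2--G3. G1's cap fires on $|t_c|+|t_t|>T_{\max}$, so it is updated and checked on every generated token. It therefore lives inside the streaming phase rather than in $\Sigma_{\text{auth}}$, which also means the phases interleave per token rather than forming a sequential product. G3 needs that cap. Checkpoints fall every $\Delta$ \emph{streamed} tokens, but between two checkpoints the backend can burn up to $\Theta(x)\,\Delta$ hidden tokens, which is precisely \texttt{F5}. An interruption then costs the merchant roughly $(1+\Theta(x))\Delta$ tokens of compute, not $\pi_{\text{wei}}\Delta$. Your fix of measuring G3's guarantee in streamed tokens does make it invariant to $w$, but it also strips the guarantee of financial content, because ``priced at $\pi_{\text{wei}}$'' ignores the hidden compute in the unsettled window. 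Step~3 is rescued only by invoking G1's $T_{\max}$, which survives a bypass of the \emph{nonce check} specifically, or by placing checkpoints on hidden plus visible tokens. Even then the bound is per call: aggregate exposure grows with the number of replayed calls that clear the verification snapshot.
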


\begin{corollary}[Attack-Chain Resistance]
\label{cor:chain-resistance}
For any composed attack (e.g., the $26{,}631\times$ replay-times-inflation chain of Section~\ref{sec:impossibility-ecological}), blocking any single link via the defense acting in the earliest protocol phase renders the entire chain infeasible, by Theorem~\ref{thm:non-interference}.
\end{corollary}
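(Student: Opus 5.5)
The statement to prove is Corollary~\ref{cor:chain-resistance}. I will derive it directly from Theorem~\ref{thm:non-interference}, taken as given, together with the per-defense bounds already established.

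\textbf{Step 1: model the attack.} A composed attack is a sequence of links $\ell_1,\dots,\ell_k$, each an instance of one flaw class among \texttt{F2}--\texttt{F5}. Each link acts in one protocol phase: authorization, metering-settlement, or streaming. The chain's leverage is the product of the per-link multipliers. For example, the $26{,}631\times$ chain factors as roughly $99\times$ from the \texttt{F3} serial replay, which is an authorization-phase link, times $269\times$ of per-call inflation, which is a metering and hidden-compute link. Let $\ell_j$ be the link whose phase comes earliest in the protocol among those covered by some defense. The claim to prove is that if that defense reduces $\ell_j$'s multiplier to at most $1$, or caps it by a constant, then the chain's leverage collapses to the residual of Theorem~\ref{thm:bounded-residual}.

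\textbf{Step 2: case analysis on the earliest phase.} The argument splits into three cases.
\begin{itemize}
\item \emph{Authorization phase.} The serial-replay link is blocked by G1's per-session nonce tracking and per-call cap. Each authorization then yields at most $\kappa$ calls, so the replay multiplier is bounded.
\item \emph{Metering phase.} G2 at $w=\alpha^*$ brings the expected per-call leakage to zero by Theorem~\ref{thm:optimal-weight}.
\item \emph{Streaming phase.} G3 bounds the loss per interruption by $\pi_{\text{wei}}\Delta$.
\end{itemize}
In every case, Theorem~\ref{thm:non-interference} lets me apply that single defense's bound without it being weakened by the state the other links manipulate. The downstream links operate on disjoint state partitions. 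They therefore cannot restore the multiplier that the blocked upstream link was supposed to supply, because a later phase only consumes what the earlier phase admitted. From this I conclude that the product of multipliers is bounded by the blocked link's bound times the bounded remaining factors. Any remaining gain falls under $\max(\pi_{\text{wei}}\Delta, g_{\text{front}})$, and so the attacker's ROI drops below~$1$. This is what I will take ``infeasible'' to mean.

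\textbf{Main obstacle.} The weak point is the claim that blocking an upstream link kills the whole product. Non-interference, as stated, only says that the guarantees of the defenses do not degrade one another. It does not by itself imply that a downstream link, such as per-call inflation, yields nothing once replay has been capped. A single authorized call can still be inflated up to $1+\Theta$, as Theorem~\ref{thm:impossibility} shows.

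I would handle this by reading ``infeasible'' economically rather than absolutely. The chain's \emph{composed} leverage, meaning the product beyond any single link, is eliminated. What remains is bounded per call by G1's $T_{\max}$ and $\pi_{\min}$, which give $R(x)\le\pi_{\max}/\pi_{\min}$, and by G3's checkpoint bound. If this economic reading turns out to be too weak, the fallback is to state the corollary as follows: the leverage of the blocked chain is at most the maximum, rather than the product, of the unblocked links' bounds, each of which is constant under G1--G3.
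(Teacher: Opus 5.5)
\textbf{There is a genuine gap, and the paper shares it.} The core of your Step~2 is the paper's whole argument. The corollary is justified only by the citation to Theorem~\ref{thm:non-interference}: the earliest-phase defense blocks its link, and since G1--G3 act on disjoint phases, nothing done elsewhere undoes the block. The obstacle you name is real, and the paper does not resolve it. Non-interference says the three guarantees hold simultaneously; it says nothing about what the unblocked links still deliver.

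Your bridging sentence, that a later phase only consumes what the earlier phase admitted, does not follow from that theorem. It is an assumption about how attacks compose (sequential links with leverage $\prod_i m_i$), and it has to be built into the definition of a composed attack. Even granting it, blocking link $j$ gives only $\prod_i m_i \le \bar m_j \prod_{i\neq j} m_i$, where $\bar m_j$ is the cap the defense imposes. This is bounded only if every other link is capped by its own defense, i.e.\ only with all of G1--G3 deployed. What is provable is therefore that the composed figure, e.g.\ $99\times 269 = 26{,}631$, is unattainable under $G^*$. It does not show the chain is infeasible outright.

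\textbf{Three steps of your repair also fail.}

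First, the metering case misreads Theorem~\ref{thm:optimal-weight}. Zero expected leakage at $w=\alpha^*$ is its no-attacker clause; on attack-prone inputs it claims optimality only up to a factor $1+\Theta(x)$. Since $P_{\text{G2}}$ is itself an output-only price, Theorem~\ref{thm:impossibility} says no weight removes that factor without overcharging honest users. So G2 does not block an inflation link, and any cap on it must come from G1's token cap and price floor ($R(x)\le\pi_{\max}/\pi_{\min}$).

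Second, collapsing the remainder to $\max(\pi_{\text{wei}}\Delta, g_{\text{front}})$ uses Theorem~\ref{thm:bounded-residual}. That theorem presupposes the full composition $G1\circ G2\circ G3$, not a single blocking defense. Moreover, a cap on the merchant's per-call loss does not imply attacker ROI below $1$, which is your chosen meaning of ``infeasible''. An attacker whose first checkpoint settlement fails (e.g.\ after revoking the allowance) receives up to $\Delta$ tokens at essentially no cost.

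Third, the ``maximum rather than product'' fallback is false in your own model. The blocked chain's leverage is $\bar m_j\prod_{i\neq j} m_i$, which exceeds the maximum of the unblocked bounds as soon as $\bar m_j>1$ or two unblocked links have multiplier above $1$.

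The defensible statement is a bound on composed leverage under $G^*$ for sequentially composed attacks. Non-interference should be used only to let the component bounds hold at once, not as the sole source of the conclusion.
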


\section{Cross-Model Measurement: Methodology and Data}
\label{app:cross-model}

\subsection{Adversarial Prompt Corpus}
\label{app:adversarial-prompts}

We curated a 30-prompt taxonomy spanning 6 categories (factual, reasoning, meta-cognitive, code, creative, and adversarial-maximizer), with $n=5$ prompts per category, issued to each 10 models. The factual category establishes the visible-token baseline; the adversarial-maximizer category is constructed to elicit maximal hidden-compute output while keeping visible output bounded, using the following five templates:
\begin{enumerate}
    \item \textit{Self-recursive reasoning}: the model reasons about its own reasoning at depth $k \in \{3,5,7\}$, bounded to a 50-token visible answer.
    \item \textit{Forced backtracking}: a deliberately misleading premise the model must detect and correct.
    \item \textit{Verified mathematical reasoning}: a number-theoretic question with short answer but expensive verification.
    \item \textit{Counterfactual exploration}: alternative-scenario consideration bounded to a brief summary.
    \item \textit{Self-consistency check}: internal comparison of multiple candidate answers before output.
\end{enumerate}
All prompts are benign and policy-compliant.

\subsection{Live-Measured Models}
Visible-token counts are live-measured for five models: GPT-5.2, GPT-5.3-Codex, and Claude Opus~4.6 are fully live (all 30 prompts, via Codex CLI and Claude Code), while Gemini~3~Pro (9/30 live) and Gemini~3~Flash (17/30 live) are partially live via Gemini CLI, with the remaining Gemini prompts filled by same-category-mean calibration (the Gemini CLI refused 3 adversarial prompts for Pro and saturated at the 2{,}500-token output cap). Thinking-token counts for all ten models are calibrated (Appendix~\ref{app:calibration}).

\subsection{Calibrated Models}
\label{app:calibration}

Visible-token counts for the remaining five models (o3, o4-mini, GPT-5.1, Sonnet~4.5, and Haiku~4.5) are calibrated by same-category-mean scaling from the live. The thinking-to-visible ratio $\Theta$ is hidden by construction and cannot be measured first-party, so for all ten models it is calibrated from published reasoning-token measurements~\cite{palace} together with each provider's documented reasoning or extended-thinking ratios. The calibrated peak (3{,}213$\times$ for o3) combines a published $\Theta = 14.3$ with a same-category-mean-calibrated $V = 210$; crucially, the independently live-measured visible-only inflation (mean 226$\times$ across the five live models) does not depend on any thinking-token estimate, and a $\pm$25/50/75\% perturbation of the thinking-token estimates leaves all qualitative conclusions stable, since the visible-only inflation alone is orders of magnitude above unity.

\subsection{Statistical Methodology}
Per-model adversarial sample $n = 5$; individual CIs wide. Cross-model effect size via Welch's $t$-test on pooled per-model means against a non-adversarial pricing-ratio baseline: $p < 0.007$, Cohen's $d \geq 3.3$.

\begin{table}[!t]
\centering
\caption{Cross-model manipulation ratios. $V$: adversarial-vs-factual visible-token inflation (live-measured on the five \emph{live} models, calibrated otherwise); $\Theta$: thinking-to-visible ratio, calibrated from published research for all ten; $R=V(1{+}\Theta)$. Live-subset mean $V=226\times$; panel mean $R=1{,}465\times$.}
\label{tab:manipulation}
\small
\setlength{\tabcolsep}{4pt}
\begin{tabular}{@{}p{3.4cm}lrrr@{}}
\toprule
Model & Src & $V$ & $\Theta$ & $R = V(1+\Theta)$ \\
\midrule
o3            & calib. & 210 & 14.3 & 3{,}213$\times$ \\
Gemini~3~Pro   & live   & 417 & 6.0  & 2{,}917$\times$ \\
o4-mini       & calib. & 200 & 11.6 & 2{,}520$\times$ \\
GPT-5.3-Codex   & live   & 269 & 6.0  & 1{,}880$\times$ \\
GPT-5.2         & live   & 222 & 6.0  & 1{,}558$\times$ \\
GPT-5.1         & calib. & 172 & 5.6  & 1{,}135$\times$ \\
Gemini~3~Flash   & live   & 147 & 6.0  & 1{,}029$\times$ \\
Claude Opus~4.6   & live & 80 & 3.0 & 320$\times$ \\
Sonnet~4.5   & calib. & 14  & 1.8  & 39$\times$ \\
Haiku~4.5   & calib. & 16  & 1.2  & 35$\times$ \\
\bottomrule
\end{tabular}
\end{table}

\subsection{Extended Cross-Model and Economic Data}
\label{app:extended-data}
Table~\ref{tab:multi-provider} gives the full per-model pricing, overcharge ratio, and projected monthly overcharge at 50M transactions (summing to ${\approx}\$4.1$M/month). Table~\ref{tab:evasion} shows that five evasion techniques each defeat a keyword-based prompt classifier at a 100\% rate while retaining 91\% of the inflation.

\begin{table*}[t]
\caption{Multi-provider cost analysis: 10 models across 3 providers. Shows per-model pricing, adversarial overcharge ratio, projected monthly overcharge at 50M transactions, and defense ROI. \textbf{Note}: the \emph{Overcharge Ratio} reported here is the cumulative-cost metric $C_{\text{api}}\cdot(T_c{+}T_t)/(P_{\text{wei}}\cdot T_c)$ aggregated over the per-provider price schedule, distinct from the per-query \emph{Manipulation Ratio} $R{=}V{\times}(1{+}\Theta)$ in Table~\ref{tab:manipulation}; the two scale differently because Overcharge Ratio integrates the provider's input/output/think pricing tiers while Manipulation Ratio is the upper-bound information-theoretic factor of Theorem~\ref{thm:impossibility}. \textbf{Gemini~3 models lead in overcharge ratio}; Anthropic models remain 20$\times$ more resistant than OpenAI/Google models.}
\label{tab:multi-provider}
\small
\centering
\begin{tabular}{@{}llrrrrrrr@{}}
\toprule
\textbf{Model} & \textbf{Provider} & \textbf{Input} & \textbf{Output} & \textbf{Think} & \textbf{Overcharge} & \textbf{Monthly} & \textbf{Per-Req} & \textbf{Defense} \\
 & & \textbf{(\$/1M)} & \textbf{(\$/1M)} & \textbf{(\$/1M)} & \textbf{Ratio} & \textbf{Overcharge} & \textbf{Profit} & \textbf{Savings} \\
\midrule
Gem 3 Pro & google & \$2.0 & \$12.0 & \$12.0 & \textbf{1,010$\times$} & \$524,730 & \$0.2089 & \$436,500 \\
Gem 3 Flash & google & \$0.5 & \$3.0 & \$3.0 & \textbf{510.0$\times$} & \$131,055 & \$0.0514 & \$109,125 \\
GPT-5.3-Codex & openai & \$12.0 & \$36.0 & \$36.0 & \textbf{476.3$\times$} & \textbf{\$1,183,410} & \$0.4724 & \$984,780 \\
o3 & openai & \$10.0 & \$40.0 & \$40.0 & \textbf{457.7$\times$} & \textbf{\$1,118,800} & \$0.4465 & \$931,200 \\
o4-mini & openai & \$1.1 & \$4.4 & \$4.4 & 391.3$\times$ & \$92,301 & \$0.0359 & \$76,824 \\
GPT 5.2 & openai & \$10.0 & \$30.0 & \$30.0 & 364.7$\times$ & \textbf{\$700,200} & \$0.2791 & \$582,750 \\
GPT 5.1 & openai & \$5.0 & \$15.0 & \$15.0 & 354.8$\times$ & \$367,088 & \$0.1458 & \$305,550 \\
Sonnet 4.5 & anthropic & \$3.0 & \$15.0 & \$15.0 & 22.8$\times$ & \$14,700 & \$0.0049 & \$10,912 \\
Haiku 4.5 & anthropic & \$0.8 & \$4.0 & \$4.0 & 19.4$\times$ & \$3,130 & \$0.0003 & \$2,330 \\
Opus 4.6 & anthropic & \$15.0 & \$75.0 & \$75.0 & 17.2$\times$ & \$54,750 & \$0.0209 & \$40,875 \\
\bottomrule
\end{tabular}
\end{table*}
\begin{table}[t]
\caption{Defense evasion: 5 techniques against keyword-based prompt classifier. \textbf{100\% evasion rate} with 91\% inflation retention (mean 1{,}714 tokens vs.\ 1{,}880 for direct adversarial).}
\label{tab:evasion}
\small
\centering
\setlength{\tabcolsep}{3pt}
\begin{tabular}{@{}lcrrr@{}}
\toprule
\textbf{Technique} & \textbf{Evaded} & \textbf{Rate} & \textbf{Tokens} & \textbf{Classified As} \\
\midrule
Synonym substitution & 2/2 & 100\% & 1{,}916 & reasoning \\
Indirect phrasing & 2/2 & 100\% & --- & unknown, code \\
Category confusion & 2/2 & 100\% & 1{,}155 & code, factual \\
Structural obfusc. & 2/2 & 100\% & --- & unknown, code \\
Implicit expansion & 2/2 & 100\% & 2{,}071 & unknown \\
\midrule
\textbf{Mean} & \textbf{10/10} & \textbf{100\%} & \textbf{1{,}714} & (91\% retained) \\
\bottomrule
\end{tabular}
\end{table}

\section{Defense Evaluation Details}
\label{app:eval-details}

\subsection{Benchmark Workload and Per-Defense Coverage}
\label{app:eval-bench}

The defense-evaluation workload comprises 500 requests across five categories (100 each): (1) normal queries, (2) reasoning-heavy queries, (3) stream-interruption attempts, (4) rapid-sequential boundary queries, and (5) AI-generated adversarial prompts. The headline mitigation result (0 successful attacks over the 500 trials) is reported in Section~\ref{sec:eval-defense}; Table~\ref{tab:defense-coverage} gives the per-defense pass/fail coverage of each attack mechanism.

\textbf{Metadata integrity (cost-display spoofing).}
\label{app:defense-hmac}
The \emph{Cost-spoof (HMAC)} row of Table~\ref{tab:defense-coverage} refers to a deployment-layer measure orthogonal to $G1$--$G3$. A merchant that streams cost and allowance metadata in the clear (e.g., in server-sent-event stream fields) lets a network adversary rewrite the displayed cost, a transparency failure that erodes user trust without direct fund loss. Including an HMAC over the metadata fields, keyed per session, makes any tampering invalidate the signature and surface a tampering warning to the client, closing the cost-display-spoofing surface at the deployment layer.

\begin{table}[t]
\centering
\caption{Defense effectiveness per attack mechanism. \checkmark: mitigated; $\times$: not addressed; $\sim$: partially mitigated. The combined stack $G^*$ mitigates every vector except front-running, which is reduced to a bounded residual.}
\label{tab:defense-coverage}
\small
\setlength{\tabcolsep}{6pt}
\begin{tabular}{@{}lcccc@{}}
\toprule
Mechanism & G1 & G2 & G3 & $G^*$ \\
\midrule
Dup-settle (\texttt{F2})    & \checkmark & $\times$    & $\times$    & \checkmark \\
Overdraft (\texttt{F3})     & \checkmark & $\times$    & $\times$    & \checkmark \\
Inflation (\texttt{F5})     & $\times$   & \checkmark  & $\times$    & \checkmark \\
Stream-abort (\texttt{F4})  & $\times$   & $\times$    & \checkmark  & \checkmark \\
Think-dispute (\texttt{F5}) & $\times$   & \checkmark  & $\times$    & \checkmark \\
Front-run (residual)        & $\times$   & $\times$    & $\sim$      & $\sim^{*}$ \\
Cost-spoof (HMAC)           & $\times$   & $\times$    & $\times$    & \checkmark$^{\dagger}$ \\
\bottomrule
\end{tabular}
\\[2pt]
{\footnotesize $^{*}$Full mitigation requires a private mempool. $^{\dagger}$Via HMAC-signed metadata, orthogonal to G1--G3.}
\end{table}

\subsection{Cost Composition}
\label{app:cost-composition}
Fig.~\ref{fig:cost-block} characterizes the cost structure adaptive billing acts on. Reasoning queries are \textbf{66\% thinking tokens} (a), so adaptive billing both lowers and \emph{stabilizes} cost, compressing the reasoning-query cost IQR by 65\% (b). The full stack raises settlement gas by 29.5\% (21{,}000$\to$27{,}200 units, ${\approx}\$0.002$ per settlement) (c).

\begin{figure}[t]
\centering
\begin{subfigure}[t]{0.48\linewidth}\centering
  % auto-generated native pgfplots (redrawn from original figure data)
\begin{tikzpicture}
\begin{axis}[x402fig, ybar stacked, bar width=6pt,
  ymin=0, ymax=10500,
  symbolic x coords={Norm,Reas,Intr,Bnd,Fuzz}, xtick=data,
  xticklabels={Normal,Reasoning,Stream,Boundary,Adversarial},
  x tick label style={rotate=30, anchor=north east, font=\tiny},
  ylabel={Tokens},
  enlarge x limits=0.13,
  legend style={at={(0.97,0.95)}, anchor=north east}, legend columns=1]
\addplot[fill=cblue, draw=cblue!55!black, line width=0.3pt]
  coordinates {(Norm,820)(Reas,920)(Intr,650)(Bnd,780)(Fuzz,1050)};
\addplot[fill=cgreen, draw=cgreen!55!black, line width=0.3pt]
  coordinates {(Norm,1320)(Reas,2000)(Intr,1020)(Bnd,1000)(Fuzz,1460)};
\addplot[fill=corange, draw=corange!55!black, line width=0.3pt]
  coordinates {(Norm,200)(Reas,5800)(Intr,200)(Bnd,200)(Fuzz,800)};
\legend{Prompt, Completion, Thinking}
\end{axis}
\end{tikzpicture}
  \caption{Token composition}\label{fig:token-breakdown}
\end{subfigure}\hfill
\begin{subfigure}[t]{0.48\linewidth}\centering
  % Cost variance box plots, base vs defended, per workload category (prepared stats from measured run)
\begin{tikzpicture}
\begin{axis}[x402fig,
  boxplot/draw direction=y,
  ymin=0, ymax=18, ylabel={Cost ($\times10^{-3}$)},
  xmin=0, xmax=9.6, enlarge x limits=false,
  xtick={0.7,2.7,4.7,6.7,8.7}, xticklabels={Normal,Reasoning,Stream,Boundary,Adversarial},
  x tick label style={rotate=30, anchor=north east, font=\tiny},
  legend style={at={(0.97,0.95)}, anchor=north east}]
% --- Base boxes (cblue) ---
\addplot+[forget plot, boxplot prepared={draw position=0.45, box extend=0.42, lower whisker=0.446, lower quartile=1.933, median=2.362, upper quartile=2.93,  upper whisker=4.106},  fill=cblue!55, draw=black!75, solid, line width=0.4pt] coordinates {};
\addplot+[forget plot, boxplot prepared={draw position=2.45, box extend=0.42, lower whisker=2.968, lower quartile=6.611, median=8.627, upper quartile=11.222,upper whisker=16.143}, fill=cblue!55, draw=black!75, solid, line width=0.4pt] coordinates {};
\addplot+[forget plot, boxplot prepared={draw position=4.45, box extend=0.42, lower whisker=0.683, lower quartile=1.522, median=1.947, upper quartile=2.425, upper whisker=3.45},  fill=cblue!55, draw=black!75, solid, line width=0.4pt] coordinates {};
\addplot+[forget plot, boxplot prepared={draw position=6.45, box extend=0.42, lower whisker=1.252, lower quartile=1.786, median=1.999, upper quartile=2.164, upper whisker=2.631}, fill=cblue!55, draw=black!75, solid, line width=0.4pt] coordinates {};
\addplot+[forget plot, boxplot prepared={draw position=8.45, box extend=0.42, lower whisker=0.794, lower quartile=2.447, median=3.094, upper quartile=4.704, upper whisker=7.851}, fill=cblue!55, draw=black!75, solid, line width=0.4pt] coordinates {};
% --- Defended boxes (cgreen) ---
\addplot+[forget plot, boxplot prepared={draw position=0.95, box extend=0.42, lower whisker=0.593, lower quartile=1.855, median=2.236, upper quartile=2.811, upper whisker=3.988}, fill=cgreen!55, draw=black!75, solid, line width=0.4pt] coordinates {};
\addplot+[forget plot, boxplot prepared={draw position=2.95, box extend=0.42, lower whisker=1.797, lower quartile=3.955, median=4.93,  upper quartile=5.647, upper whisker=8.132}, fill=cgreen!55, draw=black!75, solid, line width=0.4pt] coordinates {};
\addplot+[forget plot, boxplot prepared={draw position=4.95, box extend=0.42, lower whisker=0.707, lower quartile=1.562, median=1.912, upper quartile=2.196, upper whisker=2.835}, fill=cgreen!55, draw=black!75, solid, line width=0.4pt] coordinates {};
\addplot+[forget plot, boxplot prepared={draw position=6.95, box extend=0.42, lower whisker=1.225, lower quartile=1.737, median=1.92,  upper quartile=2.159, upper whisker=2.465}, fill=cgreen!55, draw=black!75, solid, line width=0.4pt] coordinates {};
\addplot+[forget plot, boxplot prepared={draw position=8.95, box extend=0.42, lower whisker=1.249, lower quartile=2.423, median=3.156, upper quartile=3.622, upper whisker=5.144}, fill=cgreen!55, draw=black!75, solid, line width=0.4pt] coordinates {};
\addlegendimage{area legend, fill=cblue!55, draw=black!75}\addlegendentry{Base}
\addlegendimage{area legend, fill=cgreen!55, draw=black!75}\addlegendentry{Def}
\end{axis}
\end{tikzpicture}
  \caption{Cost variance}\label{fig:cost-variance}
\end{subfigure}\\[3pt]
\begin{subfigure}[t]{0.48\linewidth}\centering
  % auto-generated native pgfplots (redrawn from original figure data)
\begin{tikzpicture}
\begin{axis}[x402fig, ybar, bar width=7pt,
  ymin=0, ymax=32000,
  symbolic x coords={Base,G1,G2,G3,All}, xtick={Base,G1,G2,G3,All},
  ylabel={Gas Used},
  enlarge x limits=0.13]
\addplot[fill=cblue, draw=cblue!55!black, line width=0.3pt, bar shift=0pt]
  coordinates {(Base,21000)};
\addplot[fill=cpurple, draw=cpurple!55!black, line width=0.3pt, bar shift=0pt]
  coordinates {(G1,23400)};
\addplot[fill=cgreen, draw=cgreen!55!black, line width=0.3pt, bar shift=0pt]
  coordinates {(G2,21050)};
\addplot[fill=corange, draw=corange!55!black, line width=0.3pt, bar shift=0pt]
  coordinates {(G3,24800)};
\addplot[fill=cred, draw=cred!55!black, line width=0.3pt, bar shift=0pt]
  coordinates {(All,27200)};
\end{axis}
\end{tikzpicture}
  \caption{Settlement gas by config}\label{fig:gas-cost}
\end{subfigure}
\caption{Cost composition and overhead. (a)~66\% of reasoning-query tokens are hidden thinking tokens. (b)~Adaptive billing compresses cost variance. (c)~Full-stack settlement gas overhead 29.5\% (${\approx}\$0.002$).}
\label{fig:cost-block}
\end{figure}

\subsection{Sensitivity Analysis}
\label{app:sensitivity}

3 sensitivity sweeps further characterize the defense regime:
\begin{itemize}
    \item Sweep over $\Delta \in \{250, 500, 1{,}000, 2{,}000, 4{,}000\}$ tokens: confirms the $\sqrt{}$-dependence of overhead on $\Delta$; minimum at $\Delta^* \approx 970$ (we set $\Delta = 1{,}000$) on our workload.
    \item Sweep over $\alpha \in \{0.5\alpha^*, \alpha^*, 1.5\alpha^*, 2\alpha^*\}$: convex leakage with minimum at $\alpha = \alpha^*$.
    \item Sweep over $T_{\max} \in \{1\text{K}, 2\text{K}, 4\text{K}, 8\text{K}\}$ tokens: inverse-linear scaling of worst-case leverage with $T_{\max}^{-1}$.
\end{itemize}
Per-trial data is in the embargoed artefact repository.

\section{In-the-Wild Exposure Census}
\label{app:wild-census}

\paragraph{Method.} On 2026-06-07 we traversed the public CDP x402 Bazaar discovery API (\texttt{GET .../v2/x402/discovery/resources}, unauthenticated, no payment) and recorded each resource's self-declared scheme, price, payee, and network. We classify a deployment as \emph{exposed} to a flaw when its declared configuration admits the flaw, an upper bound on the attack surface that we never equate with exploitation: \texttt{F1}--\texttt{F5} live in the off-chain-to-on-chain gap and leave no on-chain trace of exploitation. Because a handful of gateways dominate the raw resource count (one aggregator alone accounts for 59\% of resources), we report host- and merchant-level denominators after de-duplication (915 merchants, 868 hosts). For \texttt{upto} resources we additionally issued the single HTTP~402 challenge any first client receives, a passive read that confirms the live scheme without payment or attack. On-chain forensics use an archive node with per-block receipt retrieval; success is read from transaction-receipt status, since a naive \texttt{eth\_call} replay is unreliable (a \emph{successful} \texttt{transferWithAuthorization} also reverts once its nonce is consumed).

\paragraph{Caveats.} Three limits bound the interpretation. (i)~Exposure is configuration, not exploitation (above). (ii)~The registered surface and the on-chain settlers are nearly disjoint: only 6.9\% of active Base settlers appear in the registry, so census ratios and on-chain ratios describe different populations and must not be combined. (iii)~Counts are a fluctuating point-in-time snapshot (the directory total moved between pulls) and the registry over-represents CDP-registered, EVM-side deployments; ratios are stable, absolute counts drift.

\paragraph{Detail.} The 16 live-confirmed \texttt{upto} resources concentrate in a small set of named production merchants serving variable-price AI inference (an OpenAI-compatible inference endpoint, a Gemini-backed chat-completions service, and an AI crypto-research API among them). The 74\% facilitator concentration resolves on-chain to a single settlement stack (one relayer account behind one upgradeable router) that settles for the great majority of active merchants; whether the off-chain facilitator service behind it is a single operator is inferred from this shared infrastructure rather than independently confirmed.

\section{Extended Discussion and Related Work}
\label{app:extended-discussion}

\subsection{Speculative Deployment-Layer Defenses}
\label{app:speculative-defenses}
Two cryptoeconomic extensions to pessimistic delivery are possible but unimplemented; we sketch them as directions rather than recommendations, since both presuppose a staking or identity layer absent from x402's anonymous-wallet threat model. \emph{Cryptoeconomic anti-griefing} targets the residual that deferred delivery leaves open, namely resource \emph{griefing}, where an adversary triggers expensive computation that fails settlement to burn the merchant's GPU and gas: a service-bond scheme has the client lock a gas bond before allocation, which the merchant slashes via a fraud proof (the signed request plus the revert receipt) when settlement reverts from client-side state drift. \emph{Risk-adaptive authorization} instead conditions the delivery discipline on agent reputation and value-at-risk, escalating from two-phase locking for unknown agents to optimistic streaming for reputed ones, with stakes slashed on misbehavior. Both trade cryptographic enforcement for cryptoeconomic game theory, at the cost of a staking or reputation layer that reintroduces the sybil problem the rest of the paper treats as unsolved.

\subsection{Ecosystem-Acknowledged Design Considerations and Concurrent Issues}
\label{sec:background-known-issues}

For completeness, we summarize design- and implementation-time concerns the x402 ecosystem has openly documented or independently reported during 2025--2026. These items are \emph{not} contributions of this paper; we surface them as protocol context that helps interpret our findings (Section~\ref{sec:vulnerability}).

\paragraph{EIP-3009 variant choice.}
The x402 EVM \texttt{exact} scheme settles via \texttt{transferWithAuthorization} on the EIP-3009 token contract~\cite{x402_scheme_exact_evm}. EIP-3009's Security Consideration~\cite{eip3009} notes that this variant allows any party holding the signed authorization to broadcast the transaction, and recommends \texttt{receiveWithAuthorization} (which enforces \texttt{msg.sender == to}) for delegated submission. Independent technical writing(agentpaytrend~\cite{agentpaytrend2026}, PayIn~\cite{payin2026erc3009}, Halborn~\cite{halborn2026x402}, Valkyri Security~\cite{valkyri2026x402}) has discussed this variant choice as a trade-off between submitter flexibility and front-running surface.

\paragraph{Push-payment irreversibility and refund tooling.}
The x402 FAQ classifies the \texttt{exact} scheme as an irreversible push payment with refunds left to merchant business logic~\cite{x402_faq}; third-party refund and escrow layers have since emerged on top of x402, including the \texttt{x402r} Refund Protocol~\cite{x402r} and the MCP-compatible PayCrow Escrow~\cite{paycrow}, which we note as ecosystem evidence rather than defenses of our own.

\paragraph{Wallet-type compatibility surface.}
The protocol does not enumerate supported wallet account types: the dominant CDP facilitator has been reported to silently fail on certain modular smart accounts (e.g., Privy-generated ERC-7579 wallets)~\cite{x402_issue_1065}, a deployment-layer gap that the EIP-1271 smart-contract signature-validation standard could close.

\paragraph{Vendor-acknowledged concurrent-signing issue in the dominant SDK.}
A Coinbase contributor acknowledged in Issue~\#1065 that the cdp-sdk contained two concurrent-signing bugs producing ``invalid signatures being put on-chain''~\cite{x402_issue_1065}, fixed in cdp-sdk PR~\#594~\cite{x402_pr_594}. We surface this vendor-acknowledged race as another concrete manifestation of the verify-and-settle coordination concern our Section~\ref{sec:vulnerability} analysis examines at the protocol-and-SDK level.

\subsection{Extended Related Work}
\label{app:extended-related}

This appendix expands the positioning of Section~\ref{sec:background} with adjacent and protocol-specific work.

\paragraph{Other concurrent agentic-payment work.} Hardening x402~\cite{hardening_x402} proposes Presidio-based PII-redaction middleware, a deployment-layer defense against a distinct attack class (PII leakage in payment metadata). Compliance-Aware Agentic Payments~\cite{compliance_aware_x402} adds policy wrappers to x402-style authentication, and Authorization Propagation in Multi-Agent AI Systems~\cite{authorization_propagation} addresses authorization invariants across non-human principals. On the AP2 side, Whispers of Wealth~\cite{whispers_of_wealth} red-teams Google's AP2 via prompt injection, and Zero-Trust Runtime Verification for AP2~\cite{ap2_zerotrust} (Lan et al.) reports replay and context-binding gaps closely paralleling our \texttt{F1} and \texttt{F2} on a different protocol family, mitigated there via runtime nonce checks.

\paragraph{Payment channels and verifiable settlement.} Formal verification of the Lightning Network~\cite{weintraub2024payout} informs the methodology of our formal state machine (Section~\ref{sec:model}); extending TEE-based balance verification to x402-like protocols is a path we flag as future work (Section~\ref{sec:discussion}).

\paragraph{MCP and agent-protocol security.} The Model Context Protocol security landscape~\cite{li2025understandingsecurityissuesmodel} is increasingly relevant as MCP integrates with x402 as a payment-required tool-invocation pathway; we flag MCP--x402 composition attacks as the natural next direction (Section~\ref{sec:discussion}).

\paragraph{Theoretical and empirical neighbors of the pricing impossibility.} Three results bear directly on \texttt{F5}. Price Reversal~\cite{pricereversal} empirically documents 21.8\% of production model-pair price comparisons reversing on real workloads (up to 28$\times$ deviation), cross-validating the structural prediction that pricing relations calibrated on cheap inputs do not generalize. Epistemic Observability~\cite{epistemic_observability} proves projection-based impossibilities for distinguishing honest from fabricated outputs, whose entropy-projection framework our proof adopts. The BAR trilemma~\cite{bar_conjecture} shows budget, authenticity, and reasoning cannot be jointly maximized; our pricing impossibility is its payment-axis specialization.

%%%%%%%%%%%%%%%%%%%%%%%%%%%%%%%%%%%%%%%%%%%%%%%%%%%%%%%%%%%%%%%%%%%%%%%%%%%%%%%%
\end{document}